\title{Computability of the entropy of one-tape Turing
  machines}
\author{Emmanuel Jeandel}
\newtheorem{theorem}{Theorem}
\newtheorem{proposition}[theorem]{Proposition}
\newtheorem{fact}[theorem]{Fact}
\newtheorem{corollary}[theorem]{Corollary}
\newtheorem{definition}{Definition}
\newenvironment{sitemize}{%
\list{$\circ$\vrule height 6pt width 0pt}{\leftmargin=1cm}}{%
   \endlist
}
\begin{document}

\maketitle

\begin{abstract}

We prove that the maximum speed and the entropy of a one-tape Turing
machine are computable, in the sense that we can approximate them to
any given precision $\epsilon$. This is contrary to popular belief, as
all dynamical properties are usually undecidable for Turing machines.
The result is quite specific to one-tape Turing machines, as it is not
true anymore for two-tape Turing machines by the results of Blondel et
al., and uses the approach of crossing sequences introduced by Hennie.

\textbf{Keywords}: Turing Machines, Dynamical Systems, Entropy,
Crossing Sequences, Automata.
\end{abstract}

\section*{Introduction}
The Turing machine is probably the most well known of all models of
computation. This particular model has many variations, that all lead
to the same notion of computability. The easiest model is the Turing
machine with just one tape and one head, that we will consider in this
paper. 

From the point of view of computability, this model is equivalent to
all others. From the point of view of \emph{complexity}, however, the
situation is very different.
Indeed, it is well known \cite{Hennie,Hartmanis, Trak2} that a language (of finite words)
accepted by such a Turing machine in \emph{linear} time is always
regular. More precisely, it can be proven that if such a Turing
machine is in time $O(n)$ on all inputs, then there is a constant $k$
so that, on any input, the machine passes at most $k$ times in any given position.

We will consider in this paper the Turing machine as a
\emph{dynamical system}: The execution is starting from \emph{any}
given configuration $c$, ie any initial state, and any initial tape, 
and we will observe the evolution.
While the Turing machine is a model of computation, it is however
quite important in the study of dynamical systems. It was intensively
studied by Kurka \cite{KurkaTuring}, and Moore \cite{Moore2, Moore} proved
that they can be embedded in various ``real life'' dynamical systems.
As an example, the uncomputability of the entropy of a Turing machine,
by Blondel et al. \cite{Blo} can be use to deduce the uncomputability
of the entropy of picewise-affine maps, proven by Koiran \cite{Koiran}
in a different way.

However, these undecidability results are usually obtained for
Turing machines with \emph{two tapes}; The basic idea is to use one
tape to simulate a given Turing machine $M$, and to control the other
tape, that will mostly do nothing computationally interesting. The \emph{computational
  complexity} of the new Turing machine will come from the first tape,
but the \emph{dynamical complexity} will come from the second tape.

There is a reason why these results use Turing machines with two
tapes. We will prove indeed that some dynamical quantities for
one-tape Turing machines are actually \emph{computable}, in the sense that
there is an algorithm that given any $\epsilon$ will give an approximation upto $\epsilon$.
The two quantities we consider are the \emph{speed} and the
\emph{entropy} of a
Turing machine. While the most theoretically important quantity is the
entropy, we will concentrate our discussion in the introduction to the
speed, which is easiest to conceive.

The speed of a Turing machine measures how fast the head goes to
infinity. Informally, the speed is greater than $\alpha$ if we can 
find a configuration $c$ for which the Turing machine is roughly 
in position $\alpha n$ after $n$ units of time. 
Note that if $\alpha$ is nonzero, this means that it takes us a time
$n/\alpha = O(n)$ to be in position $n$. Now, if we recall a previous result, a Turing machine with one tape
with running time $O(n)$ on all inputs does nothing interesting. We
will prove, using the same techniques, that this also applies to the maximum speed: If the maximum
speed is nonzero, hence the running time on \emph{some} infinite
configuration is (asymptotically) \emph{linear}, then there is a
\emph{regular} (ultimately periodic) configuration that achieves
this maximum speed.

This paper is organized as follows. In the first section, we introduce
the formal definitions of the speed and entropy of a Turing machine.
In the next section, we proceed to prove the three main theorems: The
speed and the entropy are computable, and the speed is actually a
rational number,  achieved by a ultimately periodic configuration.

\section{Definitions}
We assume the reader is familiar with Turing machines.
A (one-tape) Turing machine $M$ is a (total) map $\delta_M: Q \times \Sigma \mapsto Q \times \Sigma \times \{-1,0,1\}$
where $Q$ is a finite set called the set of states, and $\Sigma$ a
finite alphabet.

Now, the best way to see it as a dynamical system might be unorthodox
at first. The idea is to consider the Turing Machine as having a
\emph{moving tape} rather than a moving head: A
\emph{configuration} is then an element of ${\mathcal C} = Q \times
\Sigma^\mathbb{Z}$, and the map $M$ on ${\mathcal C}$ is defined as
follows: $M((q,c)) = (q',c')$ where $\delta_M(q,c(0)) = (q',a,v)$,
$c'(-v) = a$ and $c'(i) = c(i+v)$ for all $i \not= -v$.
This distinction is particularly important for the definition of the
entropy to be technically correct. However it is best to consider the
Turing machines as we are used to, and we will say ``the Turing
machine is in position $i$'' rather than ``the tape has moved $i$
positions to the right''.

\subsection*{The speed}

Given a configuration $c \in {\mathcal C}$, the \emph{speed} of $M$ on $c$
is the average number of cells that are read per unit of time.
Formally, let $s_n(c)$ be the number of \emph{different} cells read
during the first $n$ steps of the evolution of the Turing Machine $M$
on input $c$.
Note that $s_n$ is subadditive : $s_{n+m}(c) \leq s_n(c) + s_m(M^n(c))$.

\begin{definition}
	
\[ \hfill \overline{s}(c) = \limsup \frac{s_n(c)}{n}
\hspace{2cm}
\underline{s}(c) = \liminf \frac{s_n(c)}{n} \hfill \hfill \]

\end{definition}

We give two examples. 
\begin{itemize}
	\item Consider a Turing machine with two states $q_1, q_2$. On
	  $q_1$, the Turing machine goes to $q_2$ without changing the
	  position of the head. On $q_2$ the Turing machine goes right
	  and changes back to $q_1$.
	  Then $\overline{s}(c) = \underline{s}(c) = 1/2$ for all $c$.
	\item Consider a Turing machine with two states $\{L,R\}$ (for
	  Left and Right) and two symbols $\{a,b\}$.
	  In state $q$, when the machine reads a symbol $a$, it goes in the
	  direction $q$.When the machine reads a symbol $b$, it writes a
	  symbol $a$ instead and changes direction.
	  On input $c = (R,w)$ where $w$ contains only the symbol $a$, the
	  Turing machine will only go to the right, and $\overline{s}(c) =
	  \underline{s}(c) = 1$.
	  On input $c = (R,w)$ where $w$ contains only the symbol $b$, the
	  Turing machine will zigzag, and will reach the $n$-th symbol to
	  the right in time $O(n^2)$, hence will see only $O(\sqrt(n))$
	  symbols in time $n$, hence $\overline{s}(c) = \underline{s}(c) = 0$.
	  On input $c = (R,w)$ where $w$ contains $b$ only at all
	  positions $(-2)^i$, the Turing machine will have read (for $n$
	  even) $2^n +  2^{n-1}$ symbols at time $2^{n+1}+2^n-2$ and $\overline{s}(c) =
	  1/2$, but only $2^{n-1}+2^{n-2}$ at time $2^{n+1} + 2^{n-2} -2$, and
	  $\underline{s}(c) = 1/3$. This is illustrated on Figure
	  \ref{fig:laseule}.
\end{itemize}	
\begin{figure}
	\begin{center}
\begin{tikzpicture}[scale=0.07]
	\draw (0,0) -- (46,46);
	\draw[->] (-2,0) -- (-2,47);
	\draw (-2,23) node[left] {$t$};
	\draw[dashed] (0,0) -- (0,46);
	\draw[dashed] (46,0) -- (46,46);
	\draw[dashed] (-2,46) -- (46,46);
    \draw (46,46) node[above] {$n$};
	\draw (0,46) node[above] {$0$};	
    \draw (46,46) node[above] {$n$};
    \draw (-2,46) node[left] {$n$};
\end{tikzpicture}	
\begin{tikzpicture}[scale=0.07]
	\draw (0,0) -- (1,1) -- (-1,3) -- (2,6) -- (-2, 10) -- (3,15) -- (-3,21) -- (4,28) -- (-4, 36) -- (5, 45);
	\draw[->] (-14,0) -- (-14,46);
	\draw (-14,23) node[left] {$t$};
	\draw[dashed] (0,0) -- (0,45);
	\draw[dashed,thick] (5,0) -- (5,46);
    \draw (0,46) node[above] {$0$};	
    \draw (5,46) node[above] {$n$};
	\draw[dashed] (10,45) -- (-14,45) ;
	\draw (-14,46) node[left] {$n(2n-1)$};
\end{tikzpicture}	
\begin{tikzpicture}[scale=0.07]
	\draw (0,0) -- (1,1) -- (-2,4) -- (4,10) -- (-8, 22) -- (16,46);
	\draw[->] (-14,0) -- (-14,47);
	\draw (-14,23) node[left] {$t$};
	\draw[dashed] (0,0) -- (0,46);
	\draw[dashed] (-8,0) -- (-8,46);
	\draw[dashed] (16,0) -- (16,46);
	\draw[dashed,thick] (4,0) -- (4,46);
    \draw (0,46) node[above] {$0$};	
    \draw (6r,46) node[above] {$2^{n-2}$};
    \draw (16,46) node[above] {$2^n$};
    \draw (-8,46) node[above] {$2^{n-1}$};
	\draw[dashed] (20,46) -- (-14,46) ;
	\draw[dashed] (20,34) -- (-14,34) ;
	\draw (-14,46) node[left] {$3 . 2^{n} - 2$};	
	\draw (-14,34) node[left] {$9 . 2^{n-2} - 2$};
\end{tikzpicture}	

\end{center}
\caption{Three different behaviours of the same Turing machine on
  three different inputs. In the first one, the speed is $1$. In the
  second one the speed is $0$. In the third one, the speed is between
  $1/3$ and $1/2$. Time goes bottom-up}
\label{fig:laseule}
\end{figure}
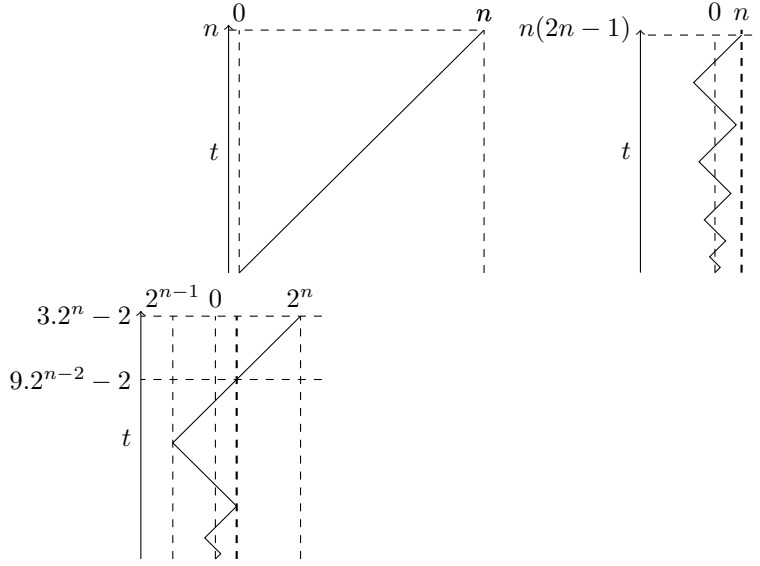
Now we define the speed of a Turing machine as the maximum of its average
speed on all configurations:
\begin{definition}
	\label{def:aprover}
\[S(M) = \max_{c \in {\mathcal C}} \underline{s}(c) =
	\max_{c \in {\mathcal C}} \overline{s}(c) =
	\lim_n \sup_c \frac{s_n(c)}{n} =  \inf_n \sup_c \frac{s_n(c)}{n}\]
\end{definition}
The fact that all these definitions are equivalent, and that the
maximum speed is indeed a maximum (it is reached by some configuration), is a
consequence of the subadditivity of $(s_n)_{n \in \mathbb{N}}$, see
\cite[Theorem 1.1]{Feng} or \cite{Mate} for a more combinatorial proof.
We give here a direct proof, following \cite{Mate} (the main difference is that a continuity argument is
replaced by a finiteness argument).

\begin{theorem}
\[S(M) = \max_{c \in {\mathcal C}} \underline{s}(c) =
	\max_{c \in {\mathcal C}} \overline{s}(c) =
	\lim_n \sup_c \frac{s_n(c)}{n} =  \inf_n \sup_c \frac{s_n(c)}{n}\]
\end{theorem}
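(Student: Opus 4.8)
The plan is to prove this chain of equalities by exploiting the subadditivity of $(s_n)$ together with Fekete's lemma, then upgrading the limit statement to an actual maximum via a compactness (or finiteness) argument. Let me write $\sigma_n = \sup_c s_n(c)$. The first thing I would verify is that the sequence $(\sigma_n)$ is itself subadditive: since $s_{n+m}(c) \leq s_n(c) + s_m(M^n(c))$, taking suprema over $c$ on both sides gives $\sigma_{n+m} \leq \sigma_n + \sigma_m$, because $\sup_c s_m(M^n(c)) \leq \sup_{c'} s_m(c') = \sigma_m$. By Fekete's subadditive lemma, $\lim_n \sigma_n / n$ exists and equals $\inf_n \sigma_n / n$. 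This immediately establishes the rightmost equality $\lim_n \sup_c \tfrac{s_n(c)}{n} = \inf_n \sup_c \tfrac{s_n(c)}{n}$, and I'll call this common value $S$.

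Next I would establish the easy inequalities. For every configuration $c$ and every $n$ we have $s_n(c) \leq \sigma_n$, so $\overline{s}(c) = \limsup_n s_n(c)/n \leq \lim_n \sigma_n/n = S$, and likewise $\underline{s}(c) \leq \overline{s}(c) \leq S$. Taking suprema over $c$ gives $\sup_c \underline{s}(c) \leq \sup_c \overline{s}(c) \leq S$. The entire theorem then reduces to producing a single configuration $c^\ast$ with $\underline{s}(c^\ast) \geq S$; since $\underline{s}(c^\ast) \leq \overline{s}(c^\ast) \leq S$, this one configuration would force equality throughout and simultaneously show the suprema are attained (hence are genuine maxima).

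The construction of the optimal $c^\ast$ is where the finiteness argument replaces the usual continuity/compactness step, and this is the part I expect to be the main obstacle. The idea is to pick, for each $n$, a configuration $c_n$ that nearly maximizes $s_n$, namely $s_n(c_n) = \sigma_n$ (a maximizer exists for each fixed $n$ because $s_n(c)$ depends only on the finitely many tape cells within distance $n$ of the origin together with the state, so there are only finitely many relevant patterns). The delicate point is to extract from the family $(c_n)$ a single configuration whose \emph{liminf} speed is at least $S$, rather than merely a configuration that does well for one scale $n$. I would do this by a diagonal/pigeonhole argument on the finite ``windows'': because only finitely many local patterns of each size exist, one can select along a subsequence a coherent limiting configuration $c^\ast$ such that for infinitely many $n$ the window of $c^\ast$ agrees with that of a near-optimal $c_m$ with $m \geq n$, guaranteeing $s_n(c^\ast)$ stays close to $\sigma_n$ and hence $\underline{s}(c^\ast) \geq S$. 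Controlling the liminf (not just the limsup) is the technical crux: I must ensure the chosen $c^\ast$ performs well at \emph{all} large times, which is exactly where subadditivity of $(\sigma_n)$ is used again, to transfer good behavior at the sampled scales to the intermediate times in between.
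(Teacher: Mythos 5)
Your reduction of the whole theorem to producing a single $c^\ast$ with $\underline{s}(c^\ast)\geq S$ is right, and your Fekete step and easy inequalities match the paper exactly. But the construction of $c^\ast$ has a genuine gap, at precisely the point you flag as the crux. First, your claim that when a window of $c^\ast$ agrees with a maximizer $c_m$ of $s_m$ (with $m\geq n$) the value $s_n(c^\ast)$ ``stays close to $\sigma_n$'' is unjustified: a configuration maximizing $s_m$ may crawl during its first $n$ steps. Subadditivity only gives $s_n(c_m)\geq \sigma_m-\sigma_{m-n}$, and since $\sigma_k$ can exceed $Sk$ by an unbounded amount (only the ratio $\sigma_k/k$ converges to $S$; $\sigma_k$ can be as large as $k$ for fixed $k$), this lower bound can fall below $Sn$ or even be vacuous. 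Second, even granting near-optimality of $s_{n_j}(c^\ast)$ along a subsequence of scales $n_j$, subadditivity of $(\sigma_n)$ does not transfer this to intermediate times as you hope: for $n_j<m<n_{j+1}$ it yields only $s_m(c^\ast)\geq s_{n_{j+1}}(c^\ast)-\sigma_{n_{j+1}-m}$, which is vacuous once $m$ is a small fraction of $n_{j+1}$, and nothing in a diagonal extraction forces consecutive good scales to have ratio tending to $1$. So your argument establishes only $\overline{s}(c^\ast)\geq S$ --- which already follows from any limit point of maximizers --- and not the liminf statement that the theorem actually needs.

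The paper closes this gap with a different idea, which is the missing ingredient in your sketch. It proves that for \emph{every} $n$ there exists a configuration $c$ that is good at \emph{all} scales up to $n$ simultaneously, i.e., $s_k(c)\geq Sk$ for all $k\leq n$; a limit point of these prefix-good configurations then satisfies $s_n\geq Sn$ for every $n$, so $\underline{s}\geq S$ with no intermediate-scale problem at all. The existence proof is by contradiction and uses a finiteness argument in place of your pigeonhole: if for some $n$ every configuration had some $k\leq n$ with $s_k(c)/k<S$, then since $s_k(c)\in\{0,1,\dots,k\}$ the quantity $s_k(c)/k$ ranges over a \emph{finite} set as $k\leq n$ and $c$ vary, so there is a uniform $\beta<S$ such that every $c$ admits some $k\leq n$ with $s_k(c)\leq\beta k$. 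Iterating this along the orbit (apply it to $c$, then to $M^{k_0}(c)$, then to $M^{k_0+k_1}(c)$, etc.) and using subadditivity gives $s_m(c)\leq\beta m+n$ for all $c$ and all $m$; taking $m$ of order $n/(S-\beta)$ forces $\sup_c s_m(c)/m\leq(S+\beta)/2<S$, contradicting $S=\inf_n\sup_c s_n(c)/n$. This uniformization-then-iteration step is what lets the paper control all scales at once, where your extraction from single-scale maximizers cannot.
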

\begin{proof}
	It is clear that the last two quantities are equal ($g_n =
	\sup_c s_n(c)$ satisfies $g_{n+m} \leq g_n + g_m$), and that they
	are larger than the others. Let $\alpha =\inf_n \sup_c \frac{s_n(c)}{n}$.
	We will prove that there exists $c$ so that $s_n(c)	\geq \alpha n$
	for all $n$, which will prove the theorem.
	By a compactness argument, it is sufficient to prove that for all
	$n$, there exists $c$ so that $s_k(c) \geq \alpha k$ for all $k \leq n$.
	(let $c_n$ be such a configuration, and take $c$ to be any limit
	point of $(c_n)_{n \geq 0}$).
	
	So suppose it is false for some $n$. That is for all
	configurations $c$, there exists $k \leq n$ so that
	$s_k(c) / k < \alpha$.
	Recall that $s_k$ is an integer between $0$ and $k$, so that the
	left side of this equation, when $k$ and $c$ varies,
	can take only finitely many values.
	This means there exists $\beta< \alpha$ so that for all
	configurations $c$, there exists $k \leq n$ so that
	$s_k(c) / k \leq \beta$.

	Now let $c$ be any configuration.
	There exists $0 < k_0 \leq n$ so that $s_{k_0}(c) \leq \beta k_0$.
	Applying the same reasoning to $M^{k_0}(c)$, we know there exists
	$0 < k_1 \leq n$ so that $s_{k_1}(M^{k_0}(c)) \leq \beta k_1$, etc.
	
    By subadditivity of $s$, this implies that for every $m$, we have $s_m(c) \leq \beta k_0
	+ \beta k_1 + {\ldots}  + s_{p} (M^{k_0 + k_1 + \dots k_q} (c))$
	for some $p \leq n$ and some $q$ so that $k_0 + \dots k_q + p = m$.
	Hence $s_m(c) \leq \beta m + n$.
	For $m = \lceil \frac{2n}{\alpha - \beta} \rceil$, this implies
	$\frac{s_m(c)}{m} \leq \frac{\alpha + \beta}{2}$. This is true of
	all $c$, so 
	$\sup_c \frac{s_m(c)}{m} \leq \frac{\alpha + \beta}{2} < \alpha$, a
	contradiction with the definition of $\alpha$.  
\end{proof}

\subsection*{The entropy}
The (topological) entropy of a Turing machine is a quantity that measures
the complexity of the trajectories. It represents roughly the average number
of bits needed to represent the trajectories.

For a configuration $c$, the \emph{trace} of $c$ is the word
$u \in (\Sigma \times Q)^\mathbb{N}$ where $u_i$ contains the letter
in position $0$ of the tape and the state at the $i$-th step
during the execution of $M$ on input $c$. We note $T(c)$ the trace of
$c$ and $T(c)_{|n}$ the first $n$ letters of the trace.
Finally, we denote by $T_n = \{ T(c)_{|n}, c \in {\mathcal C}\}$

Then the entropy can be defined by
\begin{definition}
	\label{def:entropy}
\[	H(M) =	\lim_n \frac{1}{n} \log \left| T_n \right| = \inf_n
	\frac{1}{n} \log \left| T_n \right|
\]
\end{definition}
The limit indeed exists and is equal to the infimum as $(\log
|T_n|)_{n \in \mathbb{N}}$
is subadditive. This definition is a specialized
version for (moving tape) Turing machines of the general
definition of entropy, and was proven equivalent in \cite{oprocha}.

Let's go back to the examples. In the first case, $T(c)_{|n}$ can take
roughly $|\Sigma|^{n/2}$ different values, and $H(M) = 1/2 \log |\Sigma|$.
In the second case, it can be proven that the first $n$ letters of
$T(c)$ can contain at most $\sqrt{n}$ symbols $(b,L)$ or $(b,R)$ (the
maximum is obtained starting from a configuration with only $b$). As a
consequence, $T_n$ is of size at most  $\sum_{i\leq {\sqrt{n}}}
\left(\begin{array}{c}n \\ i\end{array}\right) \leq \sqrt{n}
	\left(\begin{array}{c}n \\ \sqrt{n}\end{array}\right)$ so that $H(M) = 0$.

It is possible to give a definition for the entropy that is very similar
to the speed. For this, we use \emph{Kolmogorov complexity}. The
(prefix-free) Kolmogorov complexity $K(x)$ of a finite word $x$ is roughly
speaking the length of the shortest program that outputs $x$.

We do not define precisely the Kolmogorov complexity, see e.g.
\cite{Downey},
but will use mainly the following easy results:
\begin{itemize}
	\item For any alphabet $\Sigma$, there exists constants $c$ and $c'$ so that
	  for all words $u$ over $\Sigma$, $K(u) \leq |u|\log|\Sigma| +
	  2\log |u| + c$ and for all words $u,v$, $K(uv) \leq K(u) + K(v) + c'$.
	\item For any computable function $f$, there exists a constant $c$
	  so that $K(f(w))\leq K(w) + c$ whenever $f(w)$ is defined.
\end{itemize}

For a trace $t$, define the \emph{lower} and \emph{upper
complexity} of $t$ by $\underline{K}(t) =
\liminf \frac{K(t_{|n})}{n}$
and $\overline{K}(t) = \limsup \frac{K(t_{|n})}{n}$.

\begin{theorem}\cite{Brudno,Simpsoneff}
	\label{thm:brudno}
\[H(M) = \max_{c \in {\mathcal C}} \underline{K}(T(c)) = \max_{c \in {\mathcal
		C}} \overline{K}(T(c))\]
\end{theorem}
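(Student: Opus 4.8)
The plan is to prove the two inequalities that squeeze all three quantities together. Since $\underline{K}(T(c)) \le \overline{K}(T(c))$ for every $c$, it suffices to establish (i) the easy bound $\max_c \overline{K}(T(c)) \le H(M)$ and (ii) the existence of a single configuration $c^\star$ with $\underline{K}(T(c^\star)) \ge H(M)$. Indeed, (i) and (ii) together with $\underline{K}\le\overline{K}$ force $\max_c\underline{K}(T(c))=\max_c\overline{K}(T(c))=H(M)$, with both maxima attained at $c^\star$.

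First I would handle the upper bound (i) by an enumeration argument. The key observation is that $T(c)_{|n}$ depends only on the initial state and on the $2n+1$ cells in positions $-n,\dots,n$ (the head moves at most one cell per step), so $T_n$ is a finite set that is \emph{computable from $n$}: one enumerates all $|Q|\cdot|\Sigma|^{2n+1}$ local configurations and runs $M$ for $n$ steps. Fixing a canonical enumeration of $T_n$, the word $T(c)_{|n}$ is recovered from the pair $(n,i)$, where $i$ is its index; encoding $n$ self-delimitingly and then $i$ in $\lceil\log|T_n|\rceil$ bits and invoking the two quoted properties of $K$ gives $K(T(c)_{|n}) \le \log|T_n| + 2\log n + O(1)$. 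Dividing by $n$ and taking $\limsup$, and using $\frac1n\log|T_n|\to H(M)$ from Definition \ref{def:entropy}, yields $\overline{K}(T(c)) \le H(M)$ for every $c$, hence $\max_c\overline{K}(T(c))\le H(M)$ and a fortiori the same bound for $\underline{K}$.

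For the lower bound (ii) I would pass to the symbolic dynamics of the traces. Let $X$ be the set of infinite traces; it is shift-invariant (the shift of $T(c)$ is $T(M(c))$), closed in $(\Sigma\times Q)^{\mathbb N}$, and in fact effectively closed, and its number of length-$n$ words is exactly $|T_n|$, so its topological entropy equals $H(M)$ (the equivalence with the general definition being \cite{oprocha}). The statement is then precisely the effective variational principle of Brudno and Simpson \cite{Brudno,Simpsoneff} applied to $X$: the topological entropy of a subshift equals the supremum over its points of the lower complexity rate, and this supremum is attained. Concretely, I would take an ergodic measure $\mu$ of maximal entropy on $X$ (which exists by upper semicontinuity of the entropy map on the compact system $X$, passing to an ergodic component), and apply Brudno's theorem, giving $\lim_n K(x_{|n})/n = h_\mu(X)=H(M)$ for $\mu$-almost every $x$. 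Any such generic trace $x=T(c^\star)$, obtained as a trace because $X$ is exactly the set of traces, satisfies $\underline{K}(T(c^\star))=\overline{K}(T(c^\star))=H(M)$.

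The main obstacle is step (ii), and specifically the appearance of $\underline K$ rather than $\overline K$ together with the requirement that the supremum be \emph{attained}. A naive counting argument — at most $2^{(H-\epsilon)n}$ words of $T_n$ have complexity below $(H-\epsilon)n$, so most are incompressible at level $n$ — controls each level separately but does not produce one branch incompressible at \emph{all} large $n$: pruning the tree of traces level by level and bounding the descendants of compressible nodes by $|T_{N-n}|$ loses a factor $2^{\delta N}$ that swamps the lower bound $|T_N|\ge 2^{H N}$. This is exactly the difficulty that Brudno's theorem resolves, which is why I route the $\liminf$ bound and the attainment through a measure of maximal entropy and its $\mu$-a.e. complexity rate rather than through a direct combinatorial branch selection.
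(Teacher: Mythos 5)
Your proposal is correct, but it takes a genuinely different route from the paper's. The paper does not reprove this statement at all: it is quoted as a known result of Brudno and Simpson, and the only proof material in the source is a draft appendix establishing a stronger, quantitative version (an ergodic measure $\mu$ on $\mathcal{C}$ with $\limsup K(T(c)_{|n})/n = H(M)$ for $\mu$-almost every $c$). That argument works directly on the configuration space: it takes the clopen partition $\mathcal{P}$ of $\mathcal{C}$ indexed by $\Sigma\times Q$ (state and symbol at position $0$), uses Oprocha's identification $H(M)=h_{top}(\mathcal{P})$, applies the variational principle \emph{for covers} to get an ergodic $\mu$ with $h_\mu(\mathcal{P})=H(M)$, and then invokes Brudno's lemma on complexity \emph{relative to a covering}. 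You instead push everything to the factor system: you observe that the trace set $X$ is a subshift with exactly $|T_n|$ words of length $n$, hence $h_{top}(X)=H(M)$ by Definition~\ref{def:entropy}, take a measure of maximal entropy on $X$, pass to an ergodic component, and apply the classical form of Brudno's theorem. Both routes are sound; yours is more self-contained modulo standard symbolic dynamics, and it yields attainment in the strong form $\lim_n K(T(c^\star)_{|n})/n = H(M)$ (so both maxima in the statement are realized by one configuration, exactly what the theorem asserts), while the paper's route avoids checking that the trace set is a subshift and produces the almost-sure statement that its later ergodic argument for the speed actually needs. Two small points in your write-up deserve care: the existence of a measure of maximal entropy should be justified by \emph{expansiveness} of subshifts (the entropy map $\mu\mapsto h_\mu$ is not upper semicontinuous on arbitrary compact systems, but is for expansive ones, and subshifts qualify); and the closedness of $X$, which you assert, needs the short compactness argument — given $c_n$ with $T(c_n)_{|n}=x_{|n}$, any limit point $c$ of $(c_n)$ satisfies $T(c)=x$ because each map $c\mapsto T(c)_{|m}$ is continuous. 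Your enumeration argument for the upper bound $K(T(c)_{|n})\le \log|T_n|+2\log n+O(1)$ is exactly right, as is your diagnosis that a purely combinatorial level-by-level pruning cannot deliver the $\liminf$ bound, which is precisely why the measure-theoretic detour through Brudno is used both by you and by the sources the paper cites.
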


From this definition, it will not be surprising that we can obtain
results on both speed and entropy using the same arguments.
\clearpage
\section{A relation between the speed and the entropy}

While this section is not strictly necessary for our result, it gives
some intuition about what is happening, and an interesting relation
between the two quantities.

Recall the definitions of the speed and the entropy, in terms of
Kolmogorov complexity. It is clear that $T(c)_{|n}$ depends only
on the content of the cells visited during the $n$ first steps of the Turing machine, and on the initial
state.
In particular $K(T(c)_{|n}) \leq s_n(c) \log |\Sigma| + O(\log s_n(c))$

As a consequence, we obtain
\begin{proposition}
	\label{prop:limsup}
	For all configurations $c$,
	\[\overline{s}(c) \geq \limsup \frac{K(T(c)_{|n})}{n\log|\Sigma|}\]
	\[\underline{s}(c) \geq \liminf \frac{K(T(c)_{|n})}{n\log|\Sigma|}\]
	In particular 
	\[S(M) \geq \frac{H(M)}{\log |\Sigma|}\]
\end{proposition}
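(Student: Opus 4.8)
The plan is to take the descriptional-complexity inequality stated just above the proposition, namely $K(T(c)_{|n}) \leq s_n(c)\log|\Sigma| + O(\log s_n(c))$, as the crux, and then to pass to the appropriate limits. So the first thing I would do is justify that inequality carefully. The key observation is that the head of a one-tape machine moves by at most one cell per step, so the set of cells visited during the first $n$ steps is a contiguous interval of cardinality exactly $s_n(c)$. Let $w \in \Sigma^{s_n(c)}$ record the initial contents of that interval. Then $T(c)_{|n}$ is computable from the finite data consisting of $w$, the initial state $q_0$, the position of the head inside $w$, and the integer $n$: a fixed program (depending only on the machine $M$) simulates $M$ for $n$ steps, reading only from the block $w$, and writes down the trace. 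Using the two Kolmogorov-complexity facts from Section~1 — subadditivity under concatenation and $K(w) \leq |w|\log|\Sigma| + 2\log|w| + c$ — the dominant term is $|w|\log|\Sigma| = s_n(c)\log|\Sigma|$, while the initial state costs $O(1)$, the head position costs $O(\log s_n(c))$, and the step count $n$ costs $O(\log n)$; all of these are of lower order.

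Second, I would divide by $n\log|\Sigma|$ and pass to the limit. Writing $K(T(c)_{|n})/(n\log|\Sigma|) \leq s_n(c)/n + O(\log n)/(n\log|\Sigma|)$, the error term tends to $0$, so taking $\limsup$ on both sides gives $\limsup K(T(c)_{|n})/(n\log|\Sigma|) \leq \limsup s_n(c)/n = \overline{s}(c)$, which is the first displayed inequality; the $\liminf$ version follows identically, since adding a null sequence changes neither the $\limsup$ nor the $\liminf$.

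Finally, for the ``in particular'' statement I would take the maximum over all configurations $c$. By Definition~\ref{def:aprover} the left-hand maximum is $S(M)$, and by Theorem~\ref{thm:brudno} we have $\max_c \overline{K}(T(c)) = H(M)$, where $\overline{K}(T(c)) = \limsup K(T(c)_{|n})/n$. Since the first inequality reads $\overline{s}(c) \geq \overline{K}(T(c))/\log|\Sigma|$, taking the maximum over $c$ yields $S(M) \geq H(M)/\log|\Sigma|$.

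I do not expect a serious obstacle here, since the essential inequality is essentially handed to us; the only place demanding genuine care is the descriptional argument, where one must be sure that the visited cells really do form a single interval, so that they can be packed into one word $w$, and that the auxiliary data is negligible after division by $n$. On this last point it is worth noting that the cost of the step count is $O(\log n)$ rather than $O(\log s_n(c))$ — the two can differ, for instance when $M$ cycles within a bounded region so that $s_n(c)$ stays bounded while $n\to\infty$ — but since $\log n / n \to 0$ this does not affect the conclusion. Both points are routine, yet they constitute the only substantive content of the argument.
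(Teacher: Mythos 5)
Your proposal is correct and takes essentially the same route as the paper: the paper's entire proof is the remark immediately preceding the proposition, namely that $T(c)_{|n}$ is determined by the visited cells and the initial state, so $K(T(c)_{|n}) \leq s_n(c)\log|\Sigma| + O(\log s_n(c))$, after which one divides by $n\log|\Sigma|$, takes $\limsup$/$\liminf$, and invokes Theorem~\ref{thm:brudno} exactly as you do. Your one refinement --- replacing the error term by $O(\log n)$ because the step count $n$ must be encoded and can exceed any function of $s_n(c)$ when the head stays in a bounded region --- is a legitimate small correction to the paper's stated bound, and as you observe it is harmless since $\log n / n \to 0$.
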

For some Turing machines (as one of the examples), the inequality may be
strict: this happens when there are some configurations of great speed,
but not \emph{many} of them.

We can overcome this problem as follows.
If $M$ is a Turing Machine over the alphabet $\Sigma$ and $A$ an other
alphabet, we denote by $M^{A}$ the Turing machine over the alphabet
$\Sigma \times A$ that works as $M$ works, without changing the $A$
component (formally, the transition function satisfies $\delta_{M^{A}}(q,(x,a)) = (q', (x',a),
v)$ if $\delta_M(q,x) = (q',x',v)$).

\begin{proposition}
\[	\frac{H(M^{A})}{\log |A|} \geq S(M) \]
\end{proposition}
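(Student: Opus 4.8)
The plan is to exhibit, for a well-chosen configuration of $M$, exponentially many distinct traces of $M^{A}$, with the exponent governed by the number of cells that $M$ visits. The guiding idea is that the $A$-component, being frozen by $M^{A}$, acts as an indelible label painted on each tape cell: whenever the head passes over a cell the trace records that cell's $A$-symbol, and since $M^{A}$ never rewrites this coordinate, the symbol is exposed the first (and every) time the cell is visited. Hence a configuration on which $M$ visits many cells can be ``dressed'' with arbitrary $A$-labels to produce many pairwise-distinguishable traces, and this is exactly the reverse phenomenon of Proposition \ref{prop:limsup}, where overwriting of the $\Sigma$-alphabet could destroy information.

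Concretely, I would first use the Theorem establishing that $S(M)$ is attained to fix a configuration $c \in {\mathcal C}$ with $\overline{s}(c) = S(M)$. For each labeling $a \in A^{\mathbb{Z}}$, let $\tilde{c}_a$ be the configuration of $M^{A}$ whose $\Sigma$-component is $c$ and whose $A$-component is $a$. By the definition of $\delta_{M^{A}}$, the $A$-coordinate never influences a transition, so $M^{A}$ on $\tilde{c}_a$ produces exactly the same sequence of states, head positions, and $\Sigma$-symbols as $M$ on $c$; in particular it reads the same set of cells, of cardinality $s_n(c)$ during the first $n$ steps, for every labeling $a$.

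Next I would establish the counting bound $|T_n| \geq |A|^{s_n(c)}$ for the machine $M^{A}$. The point is that the map sending a labeling, restricted to the $s_n(c)$ cells read during the first $n$ steps, to the length-$n$ trace prefix $T(\tilde{c}_a)_{|n}$ is injective: each such cell sits under the head at some step $i < n$, and since $M^{A}$ leaves the $A$-coordinate untouched, the trace entry $u_i$ exposes the original $A$-symbol of that cell. Thus two labelings differing on any visited cell yield different prefixes, giving at least $|A|^{s_n(c)}$ distinct elements of $T_n$.

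Finally I would pass to the limit. Dividing $\log|T_n| \geq s_n(c)\,\log|A|$ by $n$ gives $\frac{1}{n}\log|T_n| \geq \frac{s_n(c)}{n}\log|A|$, and taking the limit superior, recalling that $H(M^{A}) = \lim_n \frac{1}{n}\log|T_n|$ exists by Definition \ref{def:entropy}, yields $H(M^{A}) \geq \overline{s}(c)\,\log|A| = S(M)\,\log|A|$, which is the claim after dividing by $\log|A|$. The one step demanding care is the injectivity above: it is precisely where the hypothesis that $M^{A}$ freezes the $A$-component is used, so verifying that every read cell is recorded in the prefix (and that the frozen $A$-symbol is what is recorded) is the crux of the argument, the rest being routine.
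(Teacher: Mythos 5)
Your proposal is correct and takes essentially the same approach as the paper: both fix a configuration $c$ of maximal speed, dress it with arbitrary frozen $A$-labelings to get the counting bound $|T_n^{A}| \geq |A|^{s_n(c)}$, and then divide by $n$ and pass to the limit. Your careful verification of injectivity (every visited cell's frozen $A$-symbol is exposed in the trace prefix) merely spells out what the paper compresses into the phrase ``as is witnessed by all configurations $c \otimes w$.''
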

\begin{proof}
First a few notations.
If $c$ is a configuration (a pair of a state and a word over $\Sigma)$ 
for the Turing Machine $M$ and $w$ is a word over the alphabet $A$, we
denote by $c \otimes w$ the configuration for the Turing Machine $M^A$
whose state is the state of $c$, and the letter in position $i$ the pair $(c_i, w_i)$.
Denote by ${\mathcal C}^{A}$ the set of configurations of $M^{A}$ and
$T^{A}$ the trace of the machine $M^{A}$.

Let $n$ be an integer and $c$ a configuration of maximal speed.
Now $T_n^{A}$ is of size at least $|A|^{s_n(c)}$, as is witnessed by
all configurations $c \otimes w$.
Hence,
\[ \frac{1}{n} \log \left| T_n^{A} \right| \geq \frac{s_n(c)}{n} \log |A|\]
By taking a limit on both sides, we get the result.
\end{proof}

We obtain the main result of this section:
\begin{corollary}
	\[ S(M) = \lim_{|A| \rightarrow \infty}	\frac{H(M^{A})}{\log
	  |\Sigma \times A|} \]
\end{corollary}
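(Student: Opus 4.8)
The plan is to combine the two preceding propositions, which sandwich $S(M)$ between $\frac{H(M^A)}{\log|\Sigma\times A|}$ from one side and a related quantity from the other, and then take the limit as $|A|\to\infty$. I would first record what the two propositions give after rewriting their denominators in terms of $\log|\Sigma\times A|=\log|\Sigma|+\log|A|$.

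First, from the second proposition I have $\frac{H(M^A)}{\log|A|}\ge S(M)$, hence
\[
\frac{H(M^A)}{\log|\Sigma\times A|}\ge S(M)\cdot\frac{\log|A|}{\log|\Sigma|+\log|A|},
\]
and the factor on the right tends to $1$ as $|A|\to\infty$, so $\liminf_{|A|\to\infty}\frac{H(M^A)}{\log|\Sigma\times A|}\ge S(M)$.

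For the reverse inequality I would apply Proposition~\ref{prop:limsup} to the machine $M^A$ over the alphabet $\Sigma\times A$. That proposition states $S(M^A)\ge\frac{H(M^A)}{\log|\Sigma\times A|}$. The key observation is that adding a passive $A$-component does not change which cells are read, so $s_n$ is identical for $M$ and $M^A$ on corresponding configurations; therefore $S(M^A)=S(M)$. This immediately yields $S(M)\ge\frac{H(M^A)}{\log|\Sigma\times A|}$ for \emph{every} finite alphabet $A$, hence $\limsup_{|A|\to\infty}\frac{H(M^A)}{\log|\Sigma\times A|}\le S(M)$. Combining the two bounds forces the limit to exist and equal $S(M)$.

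The main obstacle, such as it is, is the reverse inequality: one must be careful to apply the speed--entropy inequality to the right machine, namely $M^A$ viewed over its own alphabet $\Sigma\times A$, and then justify $S(M^A)=S(M)$. This identity is the crux, and it follows directly from the definition of $M^A$, since the $A$-component is never altered and plays no role in the head's motion, so the set of cells visited in $n$ steps—and thus $s_n$—agrees with that of $M$ on the underlying configuration. Everything else is the elementary limit computation for $\frac{\log|A|}{\log|\Sigma|+\log|A|}\to 1$.
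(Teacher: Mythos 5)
Your proof is correct and follows essentially the same route as the paper's: the paper's own argument is precisely the remark that $S(M) = S(M^A)$, that $\log|A|/\log|\Sigma\times A| \rightarrow 1$, and the combination of Proposition~\ref{prop:limsup} (applied to $M^A$ over its alphabet $\Sigma\times A$) with the proposition $\frac{H(M^A)}{\log|A|}\geq S(M)$. You have merely written out in full the details the paper compresses into one line, including the justification of $S(M^A)=S(M)$ via the passivity of the $A$-component.
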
	
\begin{proof}
	Remark that $S(M) = S(M^A)$ and that $\log |A| / \log |A \times
	\Sigma| \rightarrow 1$, and combine the last two propositions.
\end{proof}	

To finish, we evaluate more precisely $H(M^A)$.
For this, we look at $T_n^A$, the first $n$ bits of the trace of $M^A$.
$T_n$ is the first $n$ bits of the trace of $M$, and is obtained from
different initial configurations $c_1 \dots c_k$. Let $E_n = \{ c_1
  \dots c_k\}$ so that $T_n = \{t(c)_{|n}, c \in E_n\}$, then it is
clear that $|T_n^A| =  \sum_{c \in E_n} |A|^{s_n(c)}$, and
\[ H(M^A) = \lim \frac{1}{n} \log \sum_{c \in E_n} |A|^{s_n(c)}
	  = \lim \frac{1}{n} \log \sum_{c \in E_n} 2^{s_n(c) \log |A|}\]
	
Now this last definition makes sense also when $|A|$ is not an integer. 
If we replace $\log |A|$ by $x$ in the last definition, we obtain what
is called the (topological) \emph{pressure} \cite{Feng} of $(s_n)_{n\in \mathbb{N}}$, denoted
$P_s(x,M)$. In the context of Turing machines, the pressure
has therefore a nice interpretation, at least in the case where $x$ is the
logarithm of an integer. In particular we recover $P_s(0,M) = H(M)$
and we obtain $\lim_{x \rightarrow +\infty} P_s(x,M) / x = S(M)$,
which is also a consequence of a more general result, see \cite[Theorem 1.2]{Feng}.

\section{Computability of the speed and the entropy}

We will prove in this section that the speed and the entropy of a
Turing machine are computable.

The proof goes as follows. 
First, by the definition of the speed as an infimum, we can
compute a sequence $s_n$ so that $S(M) = \inf s_n$.
So it is sufficient to find a (computable) sequence $s'_n$ so that $S(M) = \sup
s'_n$ to compute the speed (find a $n$ so that $s_n - s'_n \leq \epsilon$)

To find such a sequence $s'_n$, it is sufficient to find
configurations $c_n$ of near maximal speed. To do that, we need to
better understand configurations of maximal speed.

First, we will establish (Propositions
\ref{prop:droite} and \ref{prop:gauche}) that a configuration of
maximal speed (entropy) cannot do too many zigzags, and must be only
finitely many times at any given position. The idea is that 
revisiting cells that were already visited is a loss of  time (and
complexity), so the machine should avoid doing it.

In the same vein, we can prove that the zigzags must not be too large
(Proposition \ref{prop:zig}): the time of the first and last visit of
a given cell must be roughly equivalent ($l_n(c) \sim f_n(c)$ in the
notation of this proposition).

All this work allows us to redefine the problem as a graph problem:
given a weighted (infinite) graph, find the path of minimum average weight
(Proposition \ref{prop:graph}). Using the graph approach, we will then prove
(Theorem \ref{thm:max}) that this average minimum weight can be well
approximated by considering only \emph{finite} graphs.
Finally, the speed and entropy for finite graphs are easy to compute
(Theorems \ref{thm:speed} and \ref{thm:entropy}), which ends the proof.

In each section, the proofs will always be done first for the speed,
then for the entropy. We deliberately choose to have similar proofs in
both cases, to help to understand the proof for the entropy,
which is more complex. In particular, some statements about the speed
are probably a bit more elaborate than they need to be.

\subsection{Biinfinite tapes are no better}

The first step in the proof is to simplify the model: we will prove
that to achieve the maximum speed (resp. maximum complexity), we only
need to consider configurations that never cross the origin, i.e.,
that stay always on the same side of the tape.
This is quite obvious, as changing from a position $i > 0$ to a
position $j < 0$ costs at least $i+(-j)$ steps, and might greatly reduce
the average speed of the TM on this configuration.

\begin{proposition}
	\label{prop:droite}
Let $c$ a configuration for which $S(M) = \lim_n \frac{s_n(c)}{n}$ and
suppose $S(M) > 0$. Then, during the computation on input $c$, 
the head of $M$ is only finitely many times in any given position $i$.
\end{proposition}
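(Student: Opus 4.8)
The plan is to argue by contradiction: suppose the head visits some
position $i$ infinitely often. The intuition, already stated in the
paper, is that returning to a position $i$ infinitely often forces the
machine to waste an unbounded amount of time revisiting cells it has
already seen, which must drag the speed below the maximum $S(M)>0$. I
would formalize ``wasted time'' by comparing, along the trajectory, the
quantity $s_n(c)$ (distinct cells read) against $n$ (total steps). If
position $i$ is visited infinitely often, then between consecutive
visits to $i$ the head executes an excursion entirely to one side of
$i$; during such an excursion it may discover new cells, but the steps
spent returning to $i$ (and any earlier-visited cells) contribute to
$n$ without contributing to $s_n$.

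Concretely, I would set up the bookkeeping as follows. Let
$t_1 < t_2 < t_3 < \dots$ be the successive times at which the head is
at position $i$, and suppose for contradiction this sequence is
infinite. At each time $t_k$ the head sits at a fixed position, so the
set of distinct cells already visited by time $t_k$ is some interval
$[a_k, b_k]$ containing $i$; thus $s_{t_k}(c) = b_k - a_k + 1$. The key
observation is that to travel from position $i$ (at time $t_k$) out to
discover $b_{k+1}-b_k$ new cells on the right (or $a_k - a_{k+1}$ on
the left) and come back to $i$ by time $t_{k+1}$, the head must spend
\emph{at least} twice the distance of the excursion: going out and
coming back. Hence roughly $t_{k+1} - t_k \geq 2(s_{t_{k+1}}(c) -
s_{t_k}(c)) + (\text{extra revisits})$, whereas the newly read cells
only grow like $s_{t_{k+1}}(c) - s_{t_k}(c)$. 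Summing these bounds over
$k$ gives a telescoping estimate showing that along the subsequence
$(t_k)$ we have $s_{t_k}(c) \leq \frac{1}{2} t_k + o(t_k)$, so
$\liminf \frac{s_n(c)}{n} \leq 1/2$. More importantly, the excursions
must in fact become long (the head returns to $i$ from ever-farther
away since only finitely many cells lie within any bounded interval and
each visit to $i$ either repeats old ground or pushes the frontier
out), which forces the ratio all the way down.

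The cleanest contradiction comes from exploiting that $c$
\emph{achieves} the maximum: $\lim_n \frac{s_n(c)}{n} = S(M) > 0$. If I
can show the existence of an infinite subsequence $t_k$ along which
$\frac{s_{t_k}(c)}{t_k}$ is bounded strictly below $S(M)$, the limit
cannot equal $S(M)$, a contradiction. The main obstacle I anticipate is
handling the case where the excursions on one side are short while the
head makes genuine linear-speed progress on the other side: I need to
rule out that the machine visits $i$ infinitely often while still
escaping to infinity fast on, say, the right. The resolution is that if
the head is at $i$ infinitely often but $s_n(c)/n \to S(M) > 0$, then
the visited interval $[a_k,b_k]$ must grow linearly in $t_k$; but each
return to $i$ from a frontier at distance $d$ costs at least $d$ steps
of pure backtracking over already-visited cells, and these backtracking
costs, summed over the infinitely many returns with $d \to \infty$,
accumulate faster than the linear budget allows, driving the average
speed strictly below $S(M)$. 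Making this accumulation estimate precise
— choosing the right subsequence and controlling the error terms — is
where the real work lies; the rest is the subadditivity bookkeeping
already familiar from the speed theorem proved above.
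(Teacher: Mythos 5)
Your overall strategy---returns to a fixed position force backtracking over already-visited cells, and since $c$ attains the limit $\lim_n s_n(c)/n = S(M)$ it suffices to exhibit a subsequence along which $s_{t_k}(c)/t_k$ is bounded strictly below $S(M)$---is exactly the mechanism of the paper's proof. But there is a genuine gap at the step you yourself flag as ``where the real work lies'', and the mechanism you sketch for closing it is false as stated. You claim that the backtracking costs, ``summed over the infinitely many returns with $d \to \infty$, accumulate faster than the linear budget allows'', i.e.\ force the ratio ``all the way down''. They do not: consider a trajectory whose excursions grow geometrically, say the $j$-th right excursion reaches distance $2^j$ from $i$ before returning. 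The head returns to $i$ infinitely often, yet the total backtracking up to the $J$-th return is $\sum_{j \le J} 2 \cdot 2^j = O(2^J)$, a constant multiple of $s_{t_J} \approx 2^J$, so the ratio $s_{t_k}/t_k$ stays near $1/4$ forever. Summation over returns thus yields only a constant-factor penalty, never a vanishing speed; and a fixed numerical bound like your telescoped $s_{t_k} \le t_k/2 + o(t_k)$ contradicts nothing unless $S(M) > 1/2$. To get a contradiction for \emph{every} $S(M) > 0$, the penalty must be multiplicative in the ratio itself, which requires comparing the return time to the time of the extreme point \emph{of the same excursion}, not summing over all excursions.

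That single comparison is what the paper does and what your sketch lacks. Fix $k$, take the first time the head reaches position $\pm k$ (WLOG $+k$: this one-line device disposes of your ``one side short, other side fast'' obstacle, since at that moment the left extent is below $k$, hence the right extent is at least $s_{t_k}(c)/2$); let $t_k$ be the time of the rightmost position before the next return to $0$ at time $t'_k$. Then $s_{t'_k}(c) = s_{t_k}(c)$ (nothing new is read while coming back) while $t'_k \ge t_k + s_{t_k}(c)/2$, giving $\frac{s_{t'_k}}{t'_k} \le \frac{s_{t_k}/t_k}{1 + s_{t_k}/(2t_k)}$. Because the hypothesis provides a true limit, both subsequence ratios tend to $S(M)$, and the inequality forces $S(M) \le \frac{S(M)}{1 + S(M)/2}$, a contradiction with $S(M) > 0$. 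Your proposal assembles the right ingredients (backtracking cost, exploiting that $c$ achieves the limit, worrying about asymmetry) but neither carries out this estimate nor any valid substitute for it, and the accumulation estimate you propose in its place cannot be made to work.
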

\begin{proof}
We prove only the result for $i = 0$, the result for all $i$ follows by considering $M^t(c)$ for some suitable $t$.
We suppose by contradiction that the head of $M$ is infinitely often in position $0$.

Let $k$ be an integer.
As $S(M) > 0$, there must exist a time $t$ for which the head is in position $\pm k$.
Let $t$ be the first time when this happens.
We may suppose w.l.o.g that at time $t$ the head is in position $+k$.
Now let $t'_k$ be the next time the head was in position $0$, and
finally let $t_k$ be the time at which the head was at its rightmost position
in the first $t'_k$ steps.

First, by definition $s_{t_k}(c) = s_{t'_k}(c)$. Furthermore,
$t'_k \geq t_k + s_{t_k}(c)/2$.
Indeed by definition of $t$, the leftmost position in the first $t_k$
steps is at most $-(k-1)$ so we went further to the right than to the
left in the first $t_k$ steps, so that the rightmost position is at least
in poisiton $s_{t_k}(c) /2$. Remark also that $t_k > k$ (by definition).

From this we obtain

\[
\frac{s_{t'_k}}{t'_k} \leq \frac{s_{t_k}}{t_k + s_{t_k}/2} \leq \frac{\frac{s_{t_k}}{t_k}}{1 + \frac{s_{t_k}}{2 t_k}}
\]
By taking a limsup on both sides we obtain
\[ S(M) \leq \frac{S(M)}{1 + \frac{S(M)}{2}} \]
A contradiction.
\end{proof}
\begin{proposition}
	\label{prop:gauche}
Let $c$ be a configuration for which $H(M) = \lim_n \frac{K(T(c)_{|n})}{n}$ and
suppose $H(M) > 0$. Then
for any position $i$, the head of $M$ is only finitely many times in position $i$.
\end{proposition}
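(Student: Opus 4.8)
The plan is to mirror the proof of Proposition \ref{prop:droite}, replacing the quantitative speed estimate with a Kolmogorov-complexity estimate. The intuition is the same: if the head returns to a fixed position infinitely often, then each such return wastes time without forcing new complexity, because the trace produced between two visits to a cell can be described cheaply. So we suppose, for contradiction, that the head is infinitely often in some position, which (after passing to $M^t(c)$ exactly as in the previous proposition) we may take to be position $0$, and we build the same sequence of times $t_k, t'_k$: let $t$ be the first time the head reaches $\pm k$, say $+k$; let $t'_k$ be the next time it is back in position $0$; and let $t_k \le t'_k$ be the time of the rightmost excursion within the first $t'_k$ steps. As before we have $t'_k \ge t_k + s_{t_k}(c)/2$ and $t_k > k$.

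The key new step, replacing the identity $s_{t_k}(c) = s_{t'_k}(c)$, is a bound of the form
\[
K(T(c)_{|t'_k}) \le K(T(c)_{|t_k}) + O(\log t'_k).
\]
To see this, note that between step $t_k$ and step $t'_k$ the head never goes strictly to the right of its position at time $t_k$ (by the choice of $t_k$ as the rightmost time), so every cell read in that window was already read before time $t_k$. Hence the contents of all relevant cells are determined by $T(c)_{|t_k}$ together with the bounded information needed to replay the machine: the configuration of the already-visited portion of the tape is reconstructible, and the suffix $T(c)_{t_k+1}\ldots T(c)_{t'_k}$ is a computable function of $T(c)_{|t_k}$ plus the numbers $t_k, t'_k$ (which cost $O(\log t'_k)$ bits). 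Using the subadditivity $K(uv) \le K(u)+K(v)+c'$ and the $O(\log)$ cost of encoding an integer, this gives the displayed inequality.

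From this, exactly as in the speed proof, I would derive
\[
\frac{K(T(c)_{|t'_k})}{t'_k}
\le \frac{K(T(c)_{|t_k}) + O(\log t'_k)}{t_k + s_{t_k}(c)/2},
\]
and then take a $\limsup$ as $k \to \infty$. Since $S(M) > 0$ whenever $H(M) > 0$ (Proposition \ref{prop:limsup}), the denominator grows like $t_k(1 + \Theta(1))$, the $O(\log t'_k)$ term is negligible against $t'_k$, and the limiting inequality reads $H(M) \le \tfrac{H(M)}{1 + S(M)/2} < H(M)$, a contradiction.

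**The main obstacle** is making the complexity inequality genuinely rigorous rather than heuristic: I must argue carefully that $T(c)_{t_k+1}\ldots T(c)_{t'_k}$ really is recoverable from $T(c)_{|t_k}$ with only $O(\log)$ overhead. The subtlety is that the trace records only position $0$, so reconstructing the intermediate tape contents requires knowing what the machine wrote on cells $\ne 0$ during the first $t_k$ steps — but this is determined by $T(c)_{|t_k}$ and the initial state, since (during the relevant window) no cell outside the already-visited region is touched. I would therefore spell out that a fixed program, given $T(c)_{|t_k}$ and the two integers $t_k, t'_k$, can simulate $M$ forward and emit the missing trace symbols, so that the $O(\log t'_k)$ accounting is legitimate.
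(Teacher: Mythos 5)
Your proof is correct and follows essentially the same route as the paper's own: the paper likewise uses the two estimates $K(T(c)_{|t'_k}) \leq K(T(c)_{|t_k}) + O(\log t'_k)$ (the trace segment between $t_k$ and $t'_k$ is recoverable by simulating $M$ over already-visited cells, exactly as you spell out) and $t'_k \geq t_k + K(T(c)_{|t_k})/(2\log|\Sigma|) + O(\log t_k)$, yielding the contradiction $H(M) \leq H(M)/\bigl(1 + H(M)/(2\log|\Sigma|)\bigr)$. One small repair to your final limit: you should invoke the per-configuration inequality $\underline{s}(c) \geq \liminf K(T(c)_{|n})/(n\log|\Sigma|) = H(M)/\log|\Sigma| > 0$ from Proposition~\ref{prop:limsup}, rather than the machine-level fact $S(M) > 0$ (which concerns a possibly different configuration and does not lower-bound $s_{t_k}(c)/t_k$ for your $c$), so the concluding inequality is $H(M) \leq H(M)/\bigl(1 + \underline{s}(c)/2\bigr) < H(M)$; any positive lower bound on $\liminf s_{t_k}(c)/t_k$ suffices for the contradiction.
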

\begin{proof}
	It's exactly the same proof.
	
Note that $K(T(c)_{t'_k}) \leq K(T(c)_{t_k}) + O(\log t'_k)$
(The first $t'_k$ bits of $T(c)$ can be recovered if we know only the
first $t_k$ bits, and the number of bits we want to recover), and
$t'_k \geq t_k + K(T(c)_{t_k}) / (2 \log |\Sigma|) + O(\log t_k)$
(Indeed $K(T(c)_{t_k}) \leq n \log |\Sigma| + O(\log t_k)$ 
where $n = s_{t_k}(c)$ is the number of bits read during times $t \leq
t_k$, and $t'_k \geq t_k + n/2$), from which we get the
same contradiction.
\end{proof}

These two propositions state that we only have to deal with 
configurations that never reach the position $i = 0$ once they leave
it at $t = 0$ (replace $c$ by $M^p(c)$ for a suitable $p$).

If we deal with the disjoint union of the Turing machine $M$ and its
mirror (exchange left and right) $\tilde M$, we may now assume, and
we do in the rest of this section, that the maximum speed and
complexity is reached with a configuration that never go to negative
positions $i < 0$ and, if $S(M) > 0$ (resp. $H(M) > 0$), that pass
only finitely many times to any given position.

\subsection{A reformulation}
Recall that we suppose in the following sections 
that the maximum speed is obtained for a configuration that never go
to negative positions.

Let call $f_n(c)$ ($f$ for first) the first time we reach position $n$.
Then the average speed on a configuration $c$ (for which the Turing
machine never goes in negative positions) can be defined equivalently as $\lim_n \frac{n}{f_n(c)}$.
We prove now a stronger statement.

Let's call $l_n(c)$ the \emph{last} time we reach position $n$.
If we never reach position $\pm n$, or if we reach it infinitely
often, let $l_n(c) = \infty$,

\begin{proposition}
\label{prop:zig}
\[	S(M) = \max_c \limsup \frac{n}{l_n(c)} = \max_c \liminf	\frac{n}{l_n(c)}\]
\[	H(M) = \max_c \limsup \frac{K(c_{|n})}{l_n(c)} = \max_c \liminf \frac{K(c_{|n})}{l_n(c)}\]

If the speed (resp. entropy) is nonzero, the maximum is reached
for some configuration $c$ for which  $l_n(c)$ is never infinite.
In particular, for this configuration, $l_n(c) \sim f_n(c)$
\end{proposition}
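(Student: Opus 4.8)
The plan is to prove both displayed equalities at once, by first establishing the easy upper bounds $\max_c \limsup \frac{n}{l_n(c)} \le S(M)$ and $\max_c \liminf \frac{n}{l_n(c)} \le S(M)$, and then exhibiting a single configuration attaining $S(M)$ for the liminf version. For the upper bounds I would use that $l_n(c) \ge f_n(c)$ always (the last visit to position $n$ occurs no earlier than the first), so $\frac{n}{l_n(c)} \le \frac{n}{f_n(c)}$ termwise, with the convention $n/\infty = 0$. Since for a configuration that never goes to negative positions one has $\overline{s}(c) = \limsup_n \frac{n}{f_n(c)}$ and $\underline{s}(c) = \liminf_n \frac{n}{f_n(c)}$ (the same inverse relationship that justifies $\overline s(c)=\lim_n \frac{n}{f_n(c)}$ above), and after the reduction of the previous subsection we may restrict to such configurations, the limsup and liminf of $\frac{n}{l_n(c)}$ are bounded by $\overline{s}(c) \le S(M)$ and $\underline{s}(c)\le S(M)$ respectively.

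The heart of the argument is the reverse inequality together with attainment. Using Theorem 1 and the reduction, I would fix a configuration $c$ with $\lim_n \frac{s_n(c)}{n} = S(M) > 0$ that never visits negative positions and visits each position only finitely often (Proposition \ref{prop:droite}); then every $l_n(c)$ is finite and every position is eventually reached. I claim this very $c$ satisfies $\liminf_n \frac{n}{l_n(c)} \ge S(M)$, which combined with the upper bounds forces all four quantities to equal $S(M)$, shows both maxima are attained by $c$, and — since then $\lim_n \frac{n}{l_n(c)} = \lim_n \frac{n}{f_n(c)} = S(M)$ with $f_n \le l_n$ — gives $l_n(c) \sim f_n(c)$.

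To prove the claim I argue by contradiction: if it failed there would be $S' < S(M)$ and infinitely many $n$ with $l_n(c) \ge n/S'$. For such an $n$ write $m = s_{l_n}(c) - 1 \ge n$ for the rightmost position reached by time $l_n$, and split into two cases according to the threshold $\rho_0 = \tfrac12\bigl(1 + S(M)/S'\bigr)$. If $m < \rho_0 n$ (a shallow excursion), then $s$ is flat at $m+1$ up to time $l_n$, and directly $\frac{s_{l_n}(c)}{l_n} \le \frac{\rho_0 n + 1}{n/S'} \to \rho_0 S' < S(M)$. If $m \ge \rho_0 n$ (a deep excursion), I use that the head descended from $m$ back to $n$ and must climb back past $m$ to set the next record, so $f_{m+1}(c) \ge l_n + (m+1-n) \ge f_m(c) + 2(m-n)$; bounding $f_m(c) \ge m/(S(M)+\epsilon)$ via $\limsup_t \frac{s_t}{t} = S(M)$ then yields $\frac{s_{f_{m+1}}(c)}{f_{m+1}} \le \frac{S(M)}{1 + 2S(M)(1 - 1/\rho_0)} + o(1) < S(M)$. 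Either way each bad $n$ produces a time $\tau_n \to \infty$ with $\frac{s_{\tau_n}(c)}{\tau_n} \le S(M) - \eta$ for a fixed $\eta = \eta(S(M),S') > 0$, so $\liminf_t \frac{s_t(c)}{t} < S(M)$, contradicting $\lim_t \frac{s_t(c)}{t} = S(M)$.

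The entropy statement is proved the same way, replacing throughout the count of visited cells $s_n(c)$ by the relevant Kolmogorov complexity and using the two inequalities from the proof of Proposition \ref{prop:gauche} ($K(T(c)_{|t'}) \le K(T(c)_{|t}) + O(\log t')$ and $t' \ge t + K(T(c)_{|t})/(2\log|\Sigma|) + O(\log t)$) in place of the elementary facts $s_{l_n} = s_{t}$ and $t' \ge t + s/2$; one starts from a configuration with $\lim_n \frac{K(T(c)_{|n})}{n} = H(M) > 0$ and invokes Proposition \ref{prop:gauche} for the finiteness of visits. The main obstacle, in both cases, is the case split: recognizing that a shallow excursion wastes time while the cell count (resp. complexity) stagnates, so one should evaluate at $l_n$, whereas a deep excursion wastes time on the forced climb back to a new record, so one should evaluate at $f_{m+1}$, and that both regimes yield a uniform gap below $S(M)$ (resp. $H(M)$); the bookkeeping of the $o(1)$ terms and of the constant $\eta$ is the only delicate point.
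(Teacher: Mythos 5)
Your speed argument is correct, but it takes a genuinely different route from the paper's. You prove $\liminf_n \frac{n}{l_n(c)} \ge S(M)$ by contradiction, via a shallow/deep dichotomy on excursions (comparing the rightmost position $m$ reached by time $l_n$ against $\rho_0 n$), extracting a uniform gap below $S(M)$ in each case. The paper instead gives a direct computation with no case split and no contradiction: setting $t_n = l_n(c)$, letting $p$ be the rightmost position reached before $t_n$ and $t'_n$ the first time $p$ is reached, it notes $s_{t_n}(c) = s_{t'_n}(c) = p$ (no negative positions), deduces $t_n/t'_n \to 1$ from the existence of the full limit $\lim_t s_t(c)/t = S(M) > 0$, and then uses the single geometric inequality $t_n \ge t'_n + s_{t_n}(c) - n$ to conclude $\frac{n}{t_n} \ge \frac{t'_n - t_n}{t_n} + \frac{s_{t_n}(c)}{t_n} \to S(M)$. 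Your inequalities $f_{m+1} \ge l_n + (m+1-n)$ and $l_n \ge f_m + (m-n)$ are the same geometric facts in disguise; the paper's phrasing buys brevity and a clean entropy analogue, while yours makes explicit \emph{why} a violation cannot occur (a shallow excursion stagnates the cell count at time $l_n$; a deep one delays the record time $f_{m+1}$) and quantifies the loss. Both arguments rest on the same preparation you correctly invoke: Theorem 1 for a configuration with $\lim s_n(c)/n = S(M)$, and Propositions \ref{prop:droite}/\ref{prop:gauche} plus the mirror-machine reduction for one-sidedness and finiteness of $l_n$.

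On the entropy half, your sketch is too thin at one point: the two inequalities you import from the proof of Proposition \ref{prop:gauche} relate trace complexities at two times, but your contradiction hypothesis concerns $K(c_{|n})$, the complexity of the \emph{tape} prefix, and those inequalities alone never connect the two. You need the bridge the paper uses in its own proof, namely that the first $l_n$ trace bits are recoverable from the first $n$ tape letters plus the $m+1-n$ further letters up to the rightmost visited cell, i.e., $K(T(c)_{|l_n}) \le K(c_{|n}) + (m+1-n)\log|\Sigma| + O(\log l_n)$. With that bridge your dichotomy does go through, provided the threshold is taken on $(m-n)/l_n$ rather than by literal substitution of $s$ by $K$: in the shallow case the ratio at time $l_n$ is at most $H' + \delta\log|\Sigma| + o(1)$, and in the deep case $f_{m+1} \ge (1+\delta)\,l_n$ while $K(T(c)_{|f_{m+1}}) \le K(T(c)_{|l_n}) + O(\log f_{m+1})$, since only previously visited cells are read between $l_n$ and $f_{m+1}$. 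This is a fixable omission in a sketch, not a flaw in the approach.
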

\begin{proof}
	It is clear that $S(M)$ and $H(M$) are upper bounds, as
	$n \leq s_{f_n(c)}(c)$ and $K(c_{|n}) \leq K(T(c)_{|f_n(c)}) + O(\log n)$.
	In particular the result is true if $S(M) = 0$ (resp. $H(M) = 0$).

We first deal with the speed. Let $c$ be a configuration of maximum speed.
By the previous subsection, we may suppose that $c$ never reaches
negative positions.

Let $t_n = l_n(c)$.
Let $p$ be the rightmost position the head reaches before $t_n$ and
$t'_n$ the first time this position is reached.
Note that $s_{t_n}(c) = s_{t'_n}(c) = p$ (no negative position is ever reached)

From this we get $\lim \frac{t_n}{t'_n} = \lim \frac{t_n}{s_{t_n}(c)}\frac{s_{t'_n}(c)}{t'_n} = 1$

Note also that $t'_n \geq n$ and $t_n \geq t'_n + s_{t_n} - n$ (we are at position $s_{t'_n} = s_{t_n}$ at time $t'_n$ and at position $n$ at time $t_n$)

Hence
\[ \frac{n}{t_n} \geq \frac{t'_n - t_n}{t_n} + \frac{s_{t_n}}{t_n}
\]
From which the result follows.

For the entropy, the proof is almost the same. 
From  $K(T(c)_{t_n}) = K(T(c)_{t'_n}) + O(\log t_n)$, we get again
that $\lim_n \frac{t'_n}{t_n} = 1$.

Now $K(T(c)_{t_n}) \leq K(c_n) + (t_n - t'_n) \log |\Sigma|  +O(\log
t_n)$ (the first $t_n$ bits of $T(c)$ can be recovered if we know
$t_n$ and the
first $p$ bits of $c$, hence if we know the first $n$ bits of $c$ and
the $p-n \geq t_n - t'_n$ next bits), from which the result follows
again.
\end{proof}

\subsection{Crossing sequences}
Now the last technical tool we need are \emph{crossing
  sequences}, introduced by Hennie \cite{Hennie}.

First denote by ${\mathcal C}^+$ the set of configurations $c$ on which:
\begin{itemize}
    \item The Turing machine never reaches any positions $i < 0$
	\item The Turing machine never reaches the position $0$ again once
	  it leaves it at $t = 0$.
	\item For any $i > 0$, the head of the Turing is only finitely many
	  times in position $i$.
\end{itemize}	
The last sections prove that we only have to deal with configurations
in ${\mathcal C}^+$.

Let $c$ be such a configuration. The \emph{crossing sequence} at boundary $i$ is the sequences of states of the
machine when its head cross the boundary between the $i$-th cell and
the $i+1$-th cell. We denote by $C_i(c)$ the crossing sequence at
boundary $i$. Note that $C_0(c)$ consists of a
single state, which is the initial state of $c$  (the machine never
reaches the position $0$ anymore) and $C_i(c)$ is finite for $i > 0$.

The main idea of the crossing sequences is that $C_i(c)$ represents
all the exchange of information between the positions $j \leq i$ and
the positions $j > i$ of the tape.
In particular, if $C_i(c) = C_j(c')$ for two configurations $c,c'$,
and if we consider the configuration $\tilde c$ that is equal
to $c$ upto $i$ then equal to $c'$ (shifted by $i-j$ so that the
$j+1$-th cell of $c'$ becomes the $i+1$-th cell of $\tilde c$), then the Turing
machine on $\tilde c$ will behave exactly like $c$ on all positions less
than $i$, and as $c'$ (shifted) on positions more than $i$.
Hence the crossing sequences capture exactly the behaviour of the
Turing machine.

We now consider the following labeled graph (automaton) $G$: The vertices
of $G$ are all finite words over the alphabet $Q$ (all possible
crossing sequences), and there is an edge 
from $w$ to $w'$ labeled by $a \in \Sigma$ if $w$ and $w'$ are
\emph{compatible}, in the sense that it
looks possible to find a configuration and a position $i$ so that
$C_{i}(c) = w$, $C_{i+1}(c) = w'$ and $a$ is the letter at position
$i+1$ in $c$ (said otherwise,
$w$ and $w'$ are two consecutive crossing sequences for some
configuration $c$). The exact definition is as follows. We define recursively two subsets $L$ and
$R$ of $Q^* \times Q^* \times \Sigma$ as follows:
\begin{sitemize}
	\item $(\epsilon, \epsilon, a) \in L$, $(\epsilon, \epsilon, a) \in R$
	\item If $\delta(q_1, a) = (q_2, b, -1)$ then $(q_1 q_2 w, w', a) \in L$ iff $(w,w',b)\in L $
	\item If $\delta(q_1, a) = (q_2, b, +1)$ then $(q_1w, q_2w', a)\in L$ iff $(w,w',b)\in R $
	\item If $\delta(q_1, a) = (q_2, b, -1)$ then $(q_2w, q_1w',
	a) \in R$ iff $(w,w',b)\in L $
	\item If $\delta(q_1, a) = (q_2, b, +1)$ then $(w, q_1q_2w', a)  \in R$ iff $(w,w',b)\in R $
\end{sitemize}	
Then there is an edge from $w$ to $w'$ labeled $a$ if and only if
$(w,w',a)\in L$.

Note that this echoes a similar definition for two-way finite automata
given in \cite[2.6]{HopcroftUllmann} where $(w,w',a)\in L$ is called
``$w$ left-matches $w'$'' (The note in Example~2.15 is particularly
relevant). The exact definition above is also hinted at
in  \cite{Pighizzini}.

Let us explain briefly these conditions. Suppose $\delta(q_1, a) =
(q_2, b, +1)$, and suppose that the Turing machine at some point
arrives in some cell $i$ from the left, in the state $q_1$ and sees $a$.
Then by definition, the first symbol from $C_i(c)$ must be $q_1$.
By definition of the local rule $\delta$, the Turing machine will
enter state $q_2$ and go right so that the first symbol in
$C_{i+1}(c)$ will be $q_2$. Now, the next time the Turing machine will
come into the cell $i$, it must be coming from the right, and when it does
it will see the symbol $b$. This explains the rule 
$(q_1w, q_2w', a)\in L$ iff $(w,w',b)\in R $, where $w$ and $w'$
represent the crossing sequences after the second time the Turing
machines comes to the cell $i$.

Now it is clear that a configuration $c$ defines a path in this graph
$G$, and that we can recover the speed of the configuration from the
graph, as explained in the following.

A \emph{path} in the graph $G$ is a sequence $p = \{(w_i, u_i)\}_{i < N}$ where $w_i$
is a vertex of $G$ and $u_i$ a letter from $\Sigma$ so
that  $(w_i, w_{i+1},u_{i}) \in L$ for all $i < N-1$.
A \emph{valid} path is an infinite path ($N = \infty$) so that $w_0$
consists of one single letter (state).  We denote by ${\mathcal P}(G)$ the
set of valid paths of a graph $G$.

The following facts are obvious:
\begin{fact}
	For any $c \in {\mathcal C}^+$, $\{(C_i(c), c_i)\}_{i \geq 0}$ is
	a valid path in $G$.
	
	Furthermore, for any valid path $p = \{(w_i, u_i)\}_{i \geq 0}$, there exists a
	configuration $c \in {\mathcal C}^+$ so that $u_i = c_i$ and $C_i(c)$ is
	a prefix of $w_i$.
\end{fact}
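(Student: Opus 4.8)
The plan is to prove the two directions separately, reading the recursive definition of $L$ and $R$ as a description of the temporal order in which the head visits a fixed cell while it peels off crossings from the two boundaries that bound that cell.

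For the forward direction, fix $c \in {\mathcal C}^+$ and a position $i+1 > 0$. Since $c \in {\mathcal C}^+$, the head reaches cell $i+1$ only finitely often, so $C_i(c)$ and $C_{i+1}(c)$ are finite words and hence legitimate vertices of $G$. I would list the successive visits to cell $i+1$ in time order. As the head starts at $0 \le i$, its first arrival at cell $i+1$ is necessarily an entry from the left, i.e. the first symbol of $C_i(c)$ is a rightward crossing; this is exactly the starting point of the $L$-rules. Each subsequent visit exits either to the left (a $v=-1$ transition, which appends one more symbol to $C_i(c)$ and keeps us in the $L$ phase) or to the right (a $v=+1$ transition, which appends to $C_{i+1}(c)$ and flips the phase to $R$, since the head will next return from the right). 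Walking through the visits and reading off the letter currently stored in cell $i+1$ at each visit reproduces, step for step, the peeling in the definition of $L$ and $R$; after the last visit the cell content never changes again and both remaining tails are empty, which is the base case $(\epsilon,\epsilon,\cdot)$. This gives $(C_i(c),C_{i+1}(c),\cdot)\in L$ at every boundary, and since $C_0(c)$ is a single state, $\{(C_i(c),c_i)\}_{i\ge 0}$ is a valid path.

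For the converse, given a valid path $p=\{(w_i,u_i)\}_{i\ge 0}$ I would define $c$ to have tape $c_i = u_i$ and initial state the unique letter of $w_0$, and then simply run $M$. The claim $C_i(c)\preceq w_i$ is proved by an induction mirroring the one above: assuming the crossing sequence already produced at boundary $i$ is a prefix of $w_i$, the relation $(w_i,w_{i+1},u_i)\in L$ forces each actual visit to cell $i+1$ to agree, in order, with the symbols prescribed by the rules, so the crossing sequence generated at boundary $i+1$ is in turn a prefix of $w_{i+1}$. One obtains only a prefix, not equality, because $L$ and $R$ merely certify that each return of the head is \emph{locally} possible, whereas in the actual run the tape to the right of cell $i+1$ may send the head off to infinity and prevent it from returning: every crossing that does occur matches $w_i$, but some crossings predicted by $w_i$ need never be realized. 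Finally one checks $c\in {\mathcal C}^+$: because $w_0$ has length one, no $v=-1$ rule applies at the origin, so the head moves right at $t=0$ and never returns to $0$; and since every $w_i$ is a finite word and $C_i(c)\preceq w_i$, each boundary, hence each cell, is crossed only finitely often and no negative position is reached.

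I expect the main obstacle to be the bookkeeping in the converse: one must synchronise the purely spatial recursion defining $L$ and $R$ (which runs across positions) with the temporal order of the head's visits (which runs across time), and phrase the induction hypothesis precisely enough, namely that the portion of $w_i$ consumed so far equals the crossings of boundary $i$ realized so far, for the prefix conclusion to propagate. A secondary nuisance is the stationary case $v=0$, for which no rule is stated; since a head that never leaves a cell is invisible to the crossing-sequence encoding, such transitions must either be ruled out or pre-processed away before the correspondence is clean.
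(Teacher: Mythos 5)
Your proof is correct and is precisely the argument the paper intends: the paper states this Fact without proof (labelling it obvious, and offering only the informal reading of the $L$ and $R$ rules as a peeling of successive visits to a cell, which is exactly what you formalize), and both your synchronization invariant --- the portion of $w_i$ consumed so far equals the crossings of boundary $i$ realized so far --- and your explanation of why the converse yields only a prefix of $w_i$ (the head may escape to the right and never realize the remaining predicted crossings) are the right details. Your flag about stationary moves is a fair catch, since the paper's model allows $v=0$ while the $L$ and $R$ rules cover only $v=\pm 1$; the clean patch for the forward direction is to add the rule that when $\delta(q_1,a)=(q_2,b,0)$ one has $(q_1w,w',a)\in L$ iff $(q_2w,w',b)\in L$ (and similarly for $R$), whereas your converse direction needs no change at all, because every transition executed along the run built from a valid path is licensed by some $L$ or $R$ rule and hence already has $v=\pm 1$, so the run never parks and each cell is visited only finitely often.
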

Note that it is indeed possible for $w_i$ to be strictly larger than $C_i(c)$.

Now we explain how we can redefine the speed on the graph $G$.

If $p$ is a finite path ($N$ is finite), the \emph{length} of $p$ is
$|p| = N$, the \emph{weight} of $p$ is $weight(p) = \sum_{i < N} |w_i|$,
and the \emph{complexity} of $p$ is $K(p) = K(u_0 \dots u_{N-1})$

If $p=(u_i, w_i)_{i \geq 0}$ is an infinite path, and $p_{|n} =
(u_i,w_i)_{i \leq n}$, the \emph{average speed} of $p$ is
$\underline{s}(p) = \liminf \frac{|p_{|n}|}{weight(p_{|n})}$ and the \emph{average
  complexity} of $p$ is $\underline{K}(p) = \liminf \frac{K(p_{|n})}{weight(p_{|n})}$.
We define similarly $\overline{s}(p)$ and $\overline{K}(p)$.

Now note that $\sum_{i < n} |C_i(c)|$ is bounded from below by the first
time we go to the position $n$, and from above by the last time we go
to position $n$. So by the previous section
\begin{proposition}
	\label{prop:graph}
\[	S(M) = \max_{p \in {\mathcal P}(G)} \underline{s}(p) = \max_{p \in
	  {\mathcal P}(G)} \overline{s}(p)\]
\[
	H(M) = \max_{p \in {\mathcal P}(G)} \underline{K}(p) = \max_{p \in {\mathcal P}(G)} \overline{K}(p)\]
\end{proposition}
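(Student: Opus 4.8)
**The plan is to translate Proposition \ref{prop:zig} into the graph language via the Fact, using the already-established correspondence between configurations and valid paths.**

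The plan is to prove Proposition~\ref{prop:graph} by matching each of its quantities to the corresponding quantity in Proposition~\ref{prop:zig}, using the Fact to pass between configurations in $\mathcal{C}^+$ and valid paths in $G$. The key observation, stated just before the proposition, is that for a configuration $c$ and its associated path $p = \{(C_i(c), c_i)\}_{i \geq 0}$, the weight $\sum_{i < n} |C_i(c)|$ counts the total number of times the head crosses the first $n$ boundaries, and is therefore bounded below by $f_n(c)$ (the first time position $n$ is reached) and above by $l_n(c)$ (the last such time). So $weight(p_{|n})$ is sandwiched between $f_n(c)$ and $l_n(c)$, and since $|p_{|n}| = n$, we have $\frac{n}{l_n(c)} \leq \frac{|p_{|n}|}{weight(p_{|n})} \leq \frac{n}{f_n(c)}$; similarly $\frac{K(c_{|n})}{l_n(c)} \leq \frac{K(p_{|n})}{weight(p_{|n})} \leq \frac{K(c_{|n})}{f_n(c)}$ (up to the usual $O(\log n)$ terms, since $K(p_{|n}) = K(u_0 \cdots u_{n-1}) = K(c_{|n})$ by definition).

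First I would prove the ``$\geq$'' direction: every valid path $p$ yields, via the Fact, a configuration $c \in \mathcal{C}^+$ with $c_i = u_i$ and $C_i(c)$ a prefix of $w_i$. Then $weight(p_{|n}) = \sum_{i<n} |w_i| \geq \sum_{i<n} |C_i(c)| \geq f_n(c)$, so $\frac{|p_{|n}|}{weight(p_{|n})} \leq \frac{n}{f_n(c)}$, whence $\overline{s}(p) \leq \overline{s}(c) \leq S(M)$ by Proposition~\ref{prop:zig} and the definition of $S(M)$; the entropy case is identical using $K(p_{|n}) = K(c_{|n})$. This shows $S(M)$ and $H(M)$ are upper bounds for the path quantities. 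For the ``$\leq$'' direction I would take a configuration $c$ achieving the maximum in Proposition~\ref{prop:zig}. When $S(M) > 0$ (resp. $H(M) > 0$) that proposition guarantees $l_n(c)$ is finite for all $n$ and $l_n(c) \sim f_n(c)$, so the lower and upper sandwich bounds coincide asymptotically; feeding in $p = \{(C_i(c), c_i)\}_{i \geq 0}$ gives a valid path with $\underline{s}(p) = \limsup \frac{n}{l_n(c)} = S(M)$, and likewise for $H(M)$. The degenerate case $S(M) = 0$ (resp. $H(M) = 0$) is handled by the trivial lower bound of zero.

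The main obstacle I expect is the gap between the weight of the \emph{path} ($\sum |w_i|$) and the true weights of the \emph{crossing sequences} ($\sum |C_i(c)|$): the Fact only guarantees $C_i(c)$ is a \emph{prefix} of $w_i$, so a valid path may overcount crossings and thus report a speed that is too low, never too high. This is precisely what makes the argument go through for the upper-bound direction (extra weight only decreases the ratio $|p_{|n}|/weight(p_{|n})$) and why the maximizing configuration, whose path uses the \emph{exact} crossing sequences $w_i = C_i(c)$, attains the supremum. I would be careful to note that the maximum is genuinely attained rather than merely approached, which follows because the maximizing $c$ from Proposition~\ref{prop:zig} produces an explicit valid path realizing the value.
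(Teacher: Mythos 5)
Your proposal is correct and follows essentially the same route as the paper: the paper's proof is precisely the one-line observation that $\sum_{i<n}|C_i(c)|$ (the path weight) is sandwiched between $f_n(c)$ and $l_n(c)$, combined with Proposition~\ref{prop:zig} and the correspondence of the Fact. You merely spell out the two directions explicitly --- including the useful remark that $C_i(c)$ being only a \emph{prefix} of $w_i$ can only inflate the weight and hence only helps the upper-bound direction --- which is a faithful expansion of the paper's argument rather than a different one.
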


Now to obtain the main theorems, let $G_k$ be the subgraph of $G$
obtained by taking only the vertices of size $|w_i| \leq k$.

\begin{theorem}
	\label{thm:max}
\[	S(M) = \sup_k \sup_{p \in {\mathcal P}(G_k)} \overline{s}(p) \]
\[	H(M) = \sup_k \sup_{p \in {\mathcal P}(G_k)} \overline{K}(p)\]
\end{theorem}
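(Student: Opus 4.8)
The plan is to establish the two inequalities separately. One direction is essentially free: since each $G_k$ is a subgraph of $G$, every valid path in $G_k$ is also a valid path in $G$, so $\sup_{p \in {\mathcal P}(G_k)} \overline{s}(p) \leq \max_{p \in {\mathcal P}(G)} \overline{s}(p) = S(M)$ by Proposition \ref{prop:graph}, and taking the supremum over $k$ preserves this bound. The same argument applies verbatim to the entropy. So the content of the theorem is the reverse inequality: $S(M) \leq \sup_k \sup_{p \in {\mathcal P}(G_k)} \overline{s}(p)$, and likewise for $H(M)$.

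For the reverse inequality, I would start from a configuration $c$ realizing the maximum in Proposition \ref{prop:graph}, which gives a valid path $p = \{(C_i(c), c_i)\}_{i \geq 0}$ in $G$ with $\overline{s}(p) = S(M)$ (and, crucially, with $l_n(c)$ never infinite when $S(M) > 0$, so that $\sum_{i<n} |C_i(c)| = \sum_{i<n}|w_i|$ is finite for every $n$). The individual crossing sequences $C_i(c)$ can be arbitrarily long, so $p$ need not live in any single $G_k$. The idea is to approximate: fix a large threshold and argue that the crossing sequences that exceed size $k$ are \emph{rare} along the path, because each long crossing sequence contributes a large amount to the weight $\sum_i |w_i|$ while advancing the length by only one. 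Quantitatively, for any prefix $p_{|n}$ the weight is $\sum_{i<n}|w_i| \geq k \cdot |\{i < n : |w_i| > k\}|$, so if the average speed $|p_{|n}|/weight(p_{|n})$ stays close to $S(M) > 0$, the fraction of positions with oversized crossing sequences must tend to zero as $k \to \infty$.

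The technical heart — and the step I expect to be the main obstacle — is converting this scarcity into an actual path inside some finite $G_k$ with speed close to $S(M)$. The difficulty is that one cannot simply delete the offending vertices: a valid path must remain a legal walk in the automaton, respecting the compatibility relation $(w_i, w_{i+1}, u_i) \in L$ at every step, and it must be infinite. I would handle this by a surgery/compactness argument: choose $k$ so that long crossing sequences occur with asymptotic frequency below some small $\eta$, excise the finitely-described ``bad'' blocks and splice the remaining good segments together using the crossing-sequence compatibility (two segments whose boundary crossing sequences match can be concatenated, exactly as in the pasting construction described before the Fact), and then extract a limit point of the resulting finite paths to obtain an infinite valid path in $G_k$. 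One must verify that the splicing changes the weight and length only by a controlled amount, so that the average speed of the spliced path is at least $S(M) - \epsilon$; letting $k \to \infty$ (hence $\eta, \epsilon \to 0$) then yields the claim.

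For the entropy the same scheme applies, with $weight(p_{|n})$ again in the denominator and $K(p_{|n}) = K(u_0 \cdots u_{n-1})$ in the numerator. The only extra care is that the splicing must not destroy too much Kolmogorov complexity: removing a bounded-frequency set of positions from the word $u_0 u_1 \cdots$ changes its complexity by an amount controlled by the number and positions of the deletions (specifiable in $O(\eta n \log n)$ bits), using the subadditivity $K(uv) \leq K(u) + K(v) + c'$ and the invariance $K(f(w)) \leq K(w) + c$ for computable $f$ from the listed properties of Kolmogorov complexity. This keeps $\overline{K}$ of the spliced path within $\epsilon$ of $H(M)$, completing the argument.
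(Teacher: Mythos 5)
Your outline gets the easy direction and the scarcity estimate right (both match the paper), and you correctly identify the surgery step as the technical heart --- but that is precisely the step your proposal does not actually resolve, and the mechanism you sketch for it fails. After excising a block containing an oversized crossing sequence, the left segment ends at some small vertex $w$ and the right segment begins at a \emph{different} small vertex $w'$; their boundary crossing sequences do not match in general, so ``concatenate segments whose boundaries match'' does not apply, and there is no reason a recurring common boundary vertex can be arranged at the edges of every bad block. The paper's device is different: for every ordered pair $(w,w')$ of vertices of size at most $k$ that $p$ visits in that order, fix once and for all a connecting path $P(w,w')$ in $G$ (it exists because $p$ itself runs from $w$ to $w'$), and let $K$ bound the weights of these finitely many paths. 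The crucial two-threshold trick is then to cut only those blocks containing a vertex of size greater than $K$ (not $k$): each excised block then has weight at least $K$ while its replacement has weight at most $K$, so the total weight never increases, and the length drops only by the number of positions carrying vertices of size $>k$, which your own density estimate controls. Without this, your splicing has no weight control at all --- cutting every block with a vertex of size just over $k$ and inserting detours of weight up to $K$ can increase the weight linearly and destroy the speed bound. (Note also that the resulting path lives in $G_K$, not in $G_k$ as you claim; this is harmless since the theorem takes $\sup_k$, but it shows why both parameters are needed. The compactness/limit-point step is superfluous once the surgery is done directly on the infinite path, as the paper does.)

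The entropy accounting in your last paragraph also has a quantitative flaw: recording the deletion positions in $O(\eta n \log n)$ bits is useless, since $\eta n\log n / n = \eta\log n \to \infty$, so this bound says nothing about $\overline{K}$. The paper instead encodes the sets of cut endpoints as subsets of $\{1,\dots,m\}$ of size at most $mL/K$, costing about $2\log\binom{m}{mL/K} \approx 2E(L/K)\,m$ bits with $E(p) = -p\log p - (1-p)\log(1-p)$, which is linear with an arbitrarily small constant. Moreover, you account only for deletions, but the surgery also \emph{inserts} the connecting paths, and these must be described too: the paper bounds this by at most $mL/K$ choices among at most $|Q|^{2(k+1)}$ candidate paths, costing roughly $(mL/K)\,2(k+1)\log|Q|$ bits --- and making this term vanish as $k\to\infty$ is exactly why the paper additionally imposes $K \geq k^2$, a condition absent from your proposal. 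So both the weight control for the speed and the complexity bookkeeping for the entropy hinge on ideas (pre-chosen connecting paths, the two thresholds $k$ and $K$ with $K \geq k^2$, subset rather than positional encoding) that your sketch is missing.
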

This means we only have to consider finite graphs to compute the speed
(resp. entropy). We will prove in the next section that the speed and
the entropy are computable for finite graphs, which will give the result.

Before going to the proof, some intuition.
Let $p$ be a path of maximum speed $S(M) > 0$.
For the speed to be nonzero, vertices of large weight cannot be too
frequent in $p$. Now the idea is to \emph{bypass} these vertices (by
using other paths in the graph $G$) to obtain a new path $p'$ with
almost the same average speed. For the speed, it's actually possible
to obtain a path $p'$ of the \emph{same} speed (this will be done in
the next section). However, for the entropy, it is likely that these
paths were actually of great complexity so that their removal gives us
a path of smallest (yet very near) average complexity.

\begin{proof}
First, the speed. One direction is obvious by definition. 
We suppose that $S(M) >  0$, otherwise the result is trivial.
Let $p$ be a path of maximum speed.

Let $k$ be any integer so that $1 /k < S(M)$. For any vertex $w$ and
$w'$ of size less or equal to  $k$ so that $p$ goes through $w$ and
$w'$ in that order, choose some finite path $P(w,w')$ from $w$ to $w'$.
Now let $K$ be an upper bound on the weights of all those paths.

The idea is now simple: we will change $p$ so that it will not go
through any vertices $w$ of size $|w| > K$.

We do it like this: Whenever there is a vertex $\tilde w$ of size greater than
$K$, we will look at the last vertex $w$ before it of size less or equal
to $k$, and to the first vertex $w'$ after it of size less or equal
to $k$, and we will replace the portion of this path by $P(w,w')$.
Let's call $p'$ this new path.
Note that there must exist such a vertex $w'$, otherwise all vertices
will be of size greater than $k$ after some time, which means the
speed on $p$ is less than $1/k$, a contradiction. 

Now we prove this construction works.

Let $n$ be an integer so that for all $m \geq n$.
\[ \frac{m}{weight(p_{|m})} \geq S(M)/2\]

Now let $m$ so that the vertex $w_m$ of $p$ is of size less than $k$.
We will look at how the $m$ first positions of $p$ where changed into $p'$.
Let $m'$ be the position of the vertex $w_m$ in $p'$ ($w_m$ still
appears in $p'$ as we only change vertices of size greater than $k$).

By the above inequality, it is clear that in the $m$ first position of
the path $p$, there is at most $2m/(kS(M))$ vertices of size greater
than $k$.
This means that this portion of the path in $p'$ is of length at least 
$m' \geq m - 2m/(kS(M))$.
Furthermore, at each time, we replace a finite path by a path of
smallest weight (each path was of weight at least $K$, and each new one
is of weight at most $K$).

As a consequence, for this new path $p'$ we have
\[ \frac{m'}{weight(p'_{|m'})} \geq \frac{m - 2m/(kS(M))}{weight(p_{|m})}\]
Hence
\[ \overline{s}(p') \geq S(M) - 2/k\]
We have proven that some path in $G_K$ is at least $2/k$ to the
optimal speed, which proves the result.

The proof for the entropy is, as always, very similar. 
We start from $1/k  < H(M)/(\log|\Sigma|))$, which guarantees that
infinitely many vertices are of weight less than $k$.
As before, we will choose $K$
greater than all weights, but now also greater than $k^2$.

First, $K(p_{|n}) \leq n\log|\Sigma| + O(\log n)$, so 
$\underline{K}(p) \leq \underline{s}(p) \log |\Sigma|$, 
so we may choose $m$ so that  for all $m\geq n$
\[ \frac{m}{weight(p_{|m})} \geq H(M)/(2\log|\Sigma|)\]

To simplify notations, let $L = H(M)/(2\log|\Sigma|))$.

We now have to evaluate $K(p'_{|m'})$.
To recover $p_{|m}$ from $p'_{|m'}$, we only need:
\begin{sitemize}
	\item To know $m$ 
	\item To know the positions of the paths that were cut.
    \item To know the labels of what was cut
	\item To know the labels of what was added instead
\end{sitemize}
First, the positions. There are at most $mL/K$ vertices of size at least $K$, 
so we did at most $mL/K$ cuts.
The cuts can be described by two sets: the set of beginnings of the
cuts, and the set of endings of the cuts.
Each set is of size at most $mL/K$. For a given size $p \leq mL/K$, there
are at most $\left(\begin{array}{c} m \\ mL/K \end{array}\right)$  sets of size $p$, so this can be
described by a binary word of size at most the logarithm of this
quantity (upto a factor $O(\log m)$).

Second, what was cut. We only cut vertices of size at least $k$, so
this can be described by a single word of size
at most $\left\lceil mL/k)\right\rceil\left\lceil \log |\Sigma|\right\rceil$.

Third, what was added. At each cut, we added one of the paths
$P(w,w')$. Now there are at most $|Q|^{k+1}$ words of size at most
$k$, hence are at most $\left(|Q|^{(k+1)}\right)^2$ such paths, and there are at
most $mL/K$ cuts, so this can be
described by a word of size at most $\left\lceil mL/K\right\rceil 2(k+1) \left\lceil \log |Q|\right\rceil$

Thus

\[
K(p'_{|m'}) \geq K(p_m) -
2 \log \left(\begin{array}{c} m \\ mL/K \end{array}\right)
- \left\lceil mL/k\right\rceil\left\lceil
\log |\Sigma|\right\rceil 
- \left\lceil mL/K\right\rceil 2(k+1) \left\lceil \log |Q|\right\rceil
- O(\log m)
\]

Now $weight(p'_{|m'}) \leq weight(p_{|m})$ and $weight(p'_{|m'}) \geq
m' \geq m (1 -L/k)$

\[
\frac{K(p'_{|m'})}{weight(p'_{|m'})} \geq \frac{K(p_m)}{weight(p_m)} -
\frac{\text{the same quantity}}{m (1-L/k)}
\]
Hence

\[
\overline{K}(p')\geq H(M) - \frac{1}{1 - L/k} \left(
-2 E\left(\frac{L}{K}\right) - \frac{L\left\lceil\log|\Sigma|\right\rceil}{k}
- \frac{L (2k+1)\log|Q|}{K}
\right)
\]
where $E(p) = -p\log p - (1-p) \log (1-p)$.
Now the quantity to the right tends to $H(M)$ when $k$ tends to
infinity (recall that we choose $K$ greater than $k^2$, which we need for the last
term), which proves the result.
\end{proof}

\subsection{The main theorems}

Now we can explain how to use the last result to prove the main theorems.

The idea is that we can compute the speed $S(M)$ from above, by the
formula
\[S(M) = \inf_n \sup_c \frac{s_n(c)}{n}\]
So it is sufficient to explain how to compute it from below and this
comes precisely from the previous theorem.

\begin{theorem}
	\label{thm:speed}
	There exists an algorithm that, given a Turing machine $M$ and a
	precision $\epsilon$, computes $S(M)$ to a precision $\epsilon$.
\end{theorem}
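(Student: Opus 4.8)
The plan is to sandwich $S(M)$ between two computable monotone sequences, one converging from above and one from below, and to run both in parallel until the gap drops below $\epsilon$. Concretely, set $u_n = \sup_c s_n(c)/n$ and $l_k = \sup_{p \in \mathcal{P}(G_k)} \overline{s}(p)$. By Definition \ref{def:aprover} we have $\inf_n u_n = S(M)$, and since $g_n = \sup_c s_n(c)$ is subadditive, $u_n \to S(M)$ with $u_n \geq S(M)$ for every $n$. By Theorem \ref{thm:max} we have $\sup_k l_k = S(M)$, and since $G_k \subseteq G_{k+1}$ the sequence $l_k$ is nondecreasing with $l_k \leq S(M)$. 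Thus $l_k \leq S(M) \leq u_n$ for all $k,n$, and for any $\epsilon$ there exist $k,n$ with $u_n - l_k < \epsilon$; once the algorithm finds such a pair it outputs $(u_n+l_k)/2$, which is within $\epsilon$ of $S(M)$.

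First I would show that each $u_n$ is exactly computable. The quantity $s_n(c)$ counts the distinct cells read in the first $n$ steps; in $n$ steps the head only reads cells within distance $n$ of the origin, so $s_n(c)$ depends solely on the initial state and on the content of the finite window $[-n,n]$ of $c$. There are finitely many such windows and states, so $u_n$ is the maximum of $s_n/n$ over this finite set, obtained by direct simulation.

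Next I would compute each $l_k$. The graph $G_k$ is finite, and for a valid infinite path $p$ we have $\overline{s}(p) = \limsup_m \frac{m}{\sum_{i<m}|w_i|}$, so maximizing $\overline{s}$ amounts to minimizing the average vertex weight. I claim $l_k = 1/\mu_k$, where $\mu_k$ is the minimum of $\mathrm{weight}(C)/|C|$ over cycles $C$ of $G_k$ reachable from a vertex of size $1$ (a valid start). For the upper bound, any length-$m$ prefix decomposes into cycles plus a remainder of length at most $|V(G_k)|$, so its total weight is at least $\mu_k(m - |V(G_k)|)$, giving $\limsup_m \frac{m}{\sum_{i<m}|w_i|} \leq 1/\mu_k$; for the lower bound, the path that reaches the optimal cycle and loops it forever realizes $1/\mu_k$. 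Because every vertex on a valid path has size at least $1$ (its crossing sequence is nonempty once the head crosses the boundary), every reachable cycle has mean weight at least $1$, so $\mu_k \geq 1$ is positive and $l_k = 1/\mu_k$ is a well-defined rational, computable by a standard minimum-mean-cycle computation (Karp) after restricting $G_k$ to the vertices reachable from size-$1$ vertices.

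The main obstacle is this finite-graph step: one must argue carefully that no clever aperiodic path can beat the best reachable cycle (the cycle-decomposition bound above), and that zero-weight cycles cannot occur, so that $\mu_k$ is positive and the reduction to minimum mean cycle is valid. Granting this, both $u_n$ and $l_k$ are computable and converge to $S(M)$ from opposite sides, so the parallel search is guaranteed to terminate and return an $\epsilon$-approximation.
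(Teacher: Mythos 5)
Your proposal is correct and follows essentially the same route as the paper: the upper approximations come from $S(M)=\inf_n \sup_c s_n(c)/n$ (Definition \ref{def:aprover}), the lower ones from Theorem \ref{thm:max}, and each $\sup_{p\in{\mathcal P}(G_k)}\overline{s}(p)$ is computed as the inverse of the minimum mean cycle weight via Karp's algorithm after trimming to vertices reachable from a size-$1$ vertex. The details you add --- computability of $u_n$ by simulating on the finite window $[-n,n]$, the cycle-decomposition bound, and the check that reachable cycles have positive mean weight --- are precisely what lies behind the paper's ``it is then obvious'' step, so your write-up is a correct, slightly more explicit rendition of the same argument.
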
	
\begin{proof}
We only have to explain how to compute the maximum speed for a finite
graph $G$. First, we may trim $G$ so that all vertices are reachable
from a vertex of size $1$.
It is then obvious that the maximum speed is obtained by a path that
goes to then follow a cycle of minimum average weight, so the maximum
speed is exactly the inverse of the minimum average weight.
This is easily computable, see \cite{Karp} for an efficient algorithm.
\end{proof}	
	
We can say a bit more
\begin{theorem}
  The maximum speed of a Turing machine $S(M)$ is a rational number.
  It is reached by a configuration which is ultimately periodic.
\end{theorem}
\begin{proof}
	We suppose that $S(M) > 0$ otherwise the result is clear.
	
	We will prove that the sequence $\sup_{p \in G_k} \overline{s}(p)$
	is stationary.
	
	Let $k = 1+ \lceil  1/S(M) \rceil$. Let $K = k(k+1) |Q|^{k+1}$
	
	Now we look at $\sup_{p \in G_{K'}} \overline{s}(p)$ for some $K' \geq K$.
	The maximum is reached for some path that reach some cycle of
	minimum average weight.
	
	Note that this cycle cannot be of length greater than $(k+1)|Q|^{k+1}$.
	Indeed, denote by $m$ the length of this cycle. As there are at
	most $|Q|^{k+1}$ vertices in this cycle of length at most $k$,
	the average speed on this cycle is less than
	
\[\frac{m}{(k+1) (m - |Q|^{k+1})} \leq 1/k < S(M)\]

    Now, there cannot be any vertices in this cycle of length at least 
	$k(k+1)|Q|^{k+1}$. otherwise the average speed would be less than
	
	\[ \frac{(k+1)|Q|^{k+1}}{k(k+1)|Q|^{k+1}} \leq 1/k < S(M)\]

	Hence this cycle is already in $G_K$.
	
	Now if we look at the cycle of minimal average weight in $G_K$ that can be
	reached in $G$, hence in $G_P$ from some $P$, then it is clear
	that $S(M)$ is exactly the inverse of the average weight of this
	cycle, and it is reached for some path $p$ in $G_P$ that reaches
then follows this cycle.
\end{proof}	
Note that, while the maximum speed is a rational number,
there is no algorithm that actually computes this rational number (we
are only able to approximate it up to any given precision). This can
be proven by an adaptation of the proof of the undecidability of the
existence of a periodic configuration in a Turing machine \cite{KariOll}.

Now we do the same for the entropy:
\begin{theorem}
	\label{thm:entropy}

	There exists an algorithm that, given a Turing machine $M$ and a
	precision $\epsilon$, computes $H(M)$ to a precision $\epsilon$.
\end{theorem}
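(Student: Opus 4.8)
The plan is to compute $H(M)$ by squeezing it between a decreasing computable sequence and an increasing computable sequence, exactly as the proof of Theorem~\ref{thm:speed} squeezes $S(M)$. From Definition~\ref{def:entropy} we have $H(M)=\inf_n \frac1n\log|T_n|$, and each $|T_n|$ is a genuinely computable integer: the first $n$ letters of a trace depend only on the content of the finitely many cells within distance $n$ of the origin, so we can enumerate $T_n$ by brute force. Hence $a_n:=\frac1n\log|T_n|$ is a computable sequence with $a_n\ge H(M)$ and $a_n\to H(M)$, giving arbitrarily good upper approximations. It therefore suffices to produce computable lower approximations $b_k\le H(M)$ with $\sup_k b_k=H(M)$; dovetailing the two computations, we halt as soon as $a_n-b_k<\epsilon$ (which must occur) and output $b_k$, which is within $\epsilon$ of $H(M)$ since $b_k\le H(M)\le a_n$.

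By Theorem~\ref{thm:max}, $H(M)=\sup_k h(G_k)$ where $h(G):=\sup_{p\in\mathcal P(G)}\overline{K}(p)$, so it remains to show that for a \emph{finite} labelled weighted graph $G$ the quantity $h(G)$ is computable to any precision (the bound $h(G_k)\le H(M)$ is immediate from Proposition~\ref{prop:graph}). The first step is to recognise $h(G)$ as an entropy. Let $\mathcal W_{\le N}$ be the set of distinct label words $u_0\cdots u_{m-1}$ that some path of $G$ of total weight $\sum_{i<m}|w_i|\le N$ produces. A counting/incompressibility argument of the same Brudno type that underlies Theorem~\ref{thm:brudno} gives
\[ h(G)=\lim_{N\to\infty}\tfrac1N\log|\mathcal W_{\le N}| : \]
the inequality $\le$ holds because any prefix's label word is specified by its index among words of the same weight, so $K(u_0\cdots u_{m-1})\le\log|\mathcal W_{\le N}|+O(\log N)$; the inequality $\ge$ holds because an incompressible word of $\mathcal W_{\le N}$ is realised as the label word of an actual path, whose complexity per weight is then close to the counting rate.

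The heart of the matter is to compute this weighted growth rate. The words in $\mathcal W_{\le N}$ form a \emph{sofic} object: several paths of $G$ may carry the same label word, so we cannot simply count paths. The plan is therefore to determinise. Applying the subset construction to $G$ (a state of the new automaton $D$ being the set of vertices of $G$ reachable by a fixed label word) yields a right-resolving presentation in which each label word corresponds to a \emph{unique} path; to keep track of weight we may first split every vertex of weight $\ell$ into a chain of $\ell$ unit-weight vertices linked by auxiliary silent transitions, so that after their removal the problem becomes an ordinary unit-weight sofic entropy. Once $D$ is deterministic, $\frac1N\log|\mathcal W_{\le N}|$ converges to $\log\lambda$, where $\lambda$ is the Perron root of the adjacency matrix of the recurrent part of $D$; in the directly weighted formulation $\lambda$ is the unique value making $B\,\mathrm{diag}(\lambda^{-|w|})$ have spectral radius $1$. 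Since the spectral radius is continuous and strictly decreasing in $\lambda$, this value is located by bisection, so $h(G)=\log\lambda$ is computable to any precision.

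The main obstacle is precisely this determinisation step together with the weights: whereas for the speed (Theorem~\ref{thm:speed}) distinct paths and a minimum-mean-cycle computation suffice, for the entropy one must count \emph{distinct label words}, and reconciling the subset construction with the per-vertex weight function is where the genuine work lies. Everything surrounding it—the two-sided squeeze, the Brudno-type identification of $h(G)$ with the growth rate, and the final Perron--Frobenius computation—is routine once this point is settled.
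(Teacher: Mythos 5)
Your scaffolding is correct and matches the paper's: the two-sided squeeze (upper approximations $a_n=\frac1n\log|T_n|$, which are indeed computable since a trace prefix depends only on cells within distance $n$ of the origin; lower approximations from Theorem \ref{thm:max} and Proposition \ref{prop:graph}) is exactly the paper's plan, and your Brudno-type identification of $h(G_k)$ with a weighted word-growth rate is a legitimate reformulation. The gap is precisely where you locate it yourself and then wave off as "genuine work": the determinisation of a vertex-weighted labelled graph is never carried out, and the mechanism you sketch does not survive scrutiny. Splitting a weight-$\ell$ vertex into a chain of unit-weight vertices joined by silent transitions produces an $\epsilon$-automaton, and the subset construction (which goes through $\epsilon$-closures) preserves the language but collapses the time scale --- yet your entropy is measured per unit of weight, i.e., per unit of the very time scale that $\epsilon$-removal destroys. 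Equivalently, what you need is determinisation of a min-plus (tropical) weighted automaton, and such automata are in general \emph{not} determinisable: the weight offsets between distinct paths carrying the same label word can grow without bound, so the subset-with-residual-weights construction need not terminate. Your closing formula --- the $\lambda$ making $B\,\mathrm{diag}(\lambda^{-|w|})$ have spectral radius $1$ --- counts \emph{paths} by weight, not distinct label words, so it overestimates $h(G)$ whenever ambiguity is unbounded. The paper is explicit that the author does \emph{not} know how to compute the maximal complexity of an arbitrary finite graph, which is exactly the general statement your "routine once settled" step would deliver.

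What closes the gap in the paper is a property special to the crossing-sequence graphs that you never invoke: $G_k$ is \emph{weakly deterministic} --- given two vertices and a label word, there is at most one path between them with that label. This bounds the ambiguity by the square of the number of vertices, so path-counting and word-counting have the same exponential rate, and the Kolmogorov complexity of a path equals that of its label word up to $O(1)$. The paper exploits this by recoding each path $\{(u_i,q_iw_i)\}$ as the word $(u_0,q_0)w_0(u_1,q_1)w_1\cdots$ over $(Q\times\Sigma)\cup Q$, whose length is exactly $weight(p_{|n})$; the admissible coded words form a subshift of finite type $B_k$, weak determinism gives $K((u_0,q_0)w_0\cdots(u_n,q_n)w_n)=K(u_0\cdots u_n)+O(1)$, and the quantity to compute becomes the entropy of an SFT, a standard Perron-root computation. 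If you add the observation that $G_k$ is unambiguous in this sense, your own path-counting formula becomes valid with no subset construction at all; without it, the heart of your proof is missing. (A smaller, fixable point: your lower bound direction only produces finite paths from incompressible words; to get an infinite valid path with large $\overline{K}$ you must concatenate incompressible blocks inside a strongly connected component reachable from a size-one vertex.)
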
	
\begin{proof}
We only have to explain how to compute the maximum complexity for a finite
graph $G$. However, we do not know how to do this in the whole
generality.
We will only prove how to do it for the graphs $G_k$, that have an
additional property: It is easy to see that they are
weakly-deterministic, in the sense, that given two vertices $w$,$w'$ and 
a word $u$, there is at most one path from $w$ to $w'$.

First we trim $G_k$ so that any vertex of $G_k$ is reachable from a
vertex of size $1$.

For a given $k$, we consider a set $B_k$ of infinite words over the
alphabet $(Q \times \Sigma) \cup Q$ defined as follows: A word is in $B_k
$ if and only if it never contains more than $k-1$ consecutive letters in $Q$,
never more than $1$ consecutive letters in $Q \times \Sigma $, and all factors
of the form $(a,q)w(b,q')w'(c,q')$ 
satisfy than there is a vertex from $qw$ to $q'w'$ labeled by $b$.

Now it it clear that if $p = \{(u_i, q_iw_i)\}_{i \geq 0}$ is an infinite
path in $G_k$, then $(u_0,q_0)w_0(u_1,q_1)w_1 \dots$ is a word of $B_k$.
Conversely, any word of $B_k$, upto the deletion of at most $k+1$
letters at its beginning, represents a path in $G_k$.

Moreover, $K((u_0,q_0)w_0 \dots (u_n,q_n)w_n) = K(u_0 \dots u_n) + O(1) = K(p_{|n}) + O(1)$.
Indeed, we can recover all the states knowing only $w_0$ and $w_n$, as
the graph is weakly deterministic.
Furthermore, the length of $(u_0,q_0)w_0 \dots (u_n,q_n)w_n$ is
exactly $weight(p_{|n})$.

This means that the maximum complexity on the graph $G_k$ can be
computed as:
\[ \sup_{w \in B_k}\lim\sup \frac{K(w_0 \dots w_n)}{n} \] 

And this we know how to compute. Indeed, $B_k$ is what is called a
subshift of finite type (it is defined by a finite set of forbidden
words), for which the above quantity is exactly the entropy (!) of
$B_k$ \cite{Brudno, Simpsoneff}, which is easy to compute, see e.g.,
\cite{LindMarcus}.

To better understand what we did in this theorem, the intuition is as
follows: Computing the entropy of the trace is difficult, but the
trace can be approximated by taking only into account configurations
for which we cross at most $k$ times the frontier between any two
consecutive cells. For this approximation $T_k$ of the trace, we can reorder the letters
inside the trace so that transitions corresponding to the same
position are consecutive, and this does not change the entropy.
However, it makes it easier to compute.
\end{proof}	

\section*{Open Problems}

The main open problem is of course to strengthen the last theorem, and
actually characterise the exact numbers that can arise as entropies of
Turing machines. It cannot be all nonnegative computable numbers, as an
enumeration of Turing machines would give us an enumeration of these
numbers, which is impossible by an easy diagonalisation argument.
The natural conjecture is that the supremum in the theorem is actually
reached, which would prove that the numbers that arises as entropies of
Turing machines are exactly the numbers that arises as entropies of
subshifts of finite type, which are well known.

Finally, the situation for Turing machines with two-tapes is not
clear. Of course, we know that the speed (resp. entropy) are not
computable \cite{Blo} (there is no algorithm that given a Turing
machine and a precision $\epsilon$ computes the speed upto
$\epsilon$), but we know of no example where the speed (resp. the
entropy) is not a rational number (resp. a computable real number).

\end{document}